\numberwithin{equation}{section}
\newcounter{thMM}
\newcounter{leMM}
\newcounter{deFF}
\newcounter{exMP}
\newcounter{prOP}
\newcounter{coRR}
\newenvironment{theorem}[1][Theorem]{\refstepcounter{thMM}\trivlist
   \item[\hskip19pt{\bf #1~\arabic{thMM}.}]\it\hskip3pt}{\endtrivlist}
\newenvironment{lemma}[1][Lemma]{\refstepcounter{leMM}\trivlist
   \item[\hskip19pt{\bf #1~\arabic{leMM}.}]\it\hskip3pt}{\endtrivlist}
\newenvironment{definition}[1][Definition]{\refstepcounter{deFF}\trivlist
   \item[\hskip19pt{\bf #1~\arabic{deFF}.}]\rm\hskip3pt}{\endtrivlist}
\newenvironment{proposition}[1][Proposition]{\refstepcounter{prOP}\trivlist
   \item[\hskip19pt{\bf #1~\arabic{prOP}.}]\it\hskip3pt}{\endtrivlist}
\newenvironment{corollary}[1][Corollary]{\refstepcounter{coRR}\trivlist
   \item[\hskip19pt{\bf #1~\arabic{coRR}.}]\it\hskip3pt}{\endtrivlist}
\newenvironment{proof}[1][Proof]{\begin{trivlist}
\item[\hskip \labelsep {\bfseries #1}] }{ \begin{flushright}$\square$\end{flushright}\end{trivlist}}
\newenvironment{remark}[1][Remark]{\begin{trivlist}
\item[\hskip \labelsep {\bfseries #1}]\it\hskip3pt}{\end{trivlist}}
\newcommand{\Dleft}{[\hspace{-1.5pt}[}
\newcommand{\Dright}{]\hspace{-1.5pt}]}
\newcommand{\SN}[1]{\Dleft #1 \Dright}
\newcommand{\Q}{\mathcal{Q}}
\newcommand{\Id}{\mathbbmss{1}}
\DeclareMathOperator{\Vect}{Vect}
\DeclareMathOperator{\w}{w}
\DeclareMathOperator{\ad}{ad}
\begin{document}
\bibliographystyle{plain}

\author{Andrew James Bruce\\ \small \emph{Pembrokeshire College},\\
\small \emph{Haverfordwest, Pembrokeshire},\\\small  \emph{SA61 1SZ, UK}\\\small email:\texttt{andrewjamesbruce@googlemail.com}}
\date{\today}
\title{Odd Jacobi manifolds and Loday--Poisson brackets}
\maketitle

\begin{abstract}
In this paper we construct a non-skewsymmetric version of a Poisson bracket on the algebra of smooth functions on an odd Jacobi supermanifold. We refer to such Poisson-like brackets as Loday--Poisson brackets. We examine the relations between the Hamiltonian vector fields with respect to both the odd Jacobi structure and the Loday--Poisson structure. Furthermore, we show that the Loday--Poisson bracket satisfies the Leibniz rule over the noncommutative product derived from the homological vector field.
\end{abstract}

\begin{small}
\noindent \textbf{MSC (2010)}: 17B70, 53D10, 53D17, 58A50, 83C47.\\
\noindent \textbf{Keywords}: supermanifolds, Jacobi structures, Q-manifolds, derived brackets, Loday brackets.
\end{small}\\

\noindent\textbf{In memory of Jean-Louis Loday (1946--2012).}

\section{Introduction}\label{sec:Introduction}

In this  paper we construct a Loday--Poisson bracket on the algebra of smooth functions on an odd Jacobi supermanifold $(M, S, Q)$, see \cite{Bruce:2012}. The key observation here is that the homological vector field $Q \in \Vect(M)$ is a Jacobi vector field with respect to the odd Jacobi structure.  Due to this one can directly employ the notion of derived brackets as defined by Kosmann--Schwarzbach \cite{Kosmann--Schwarzbach:1996,Kosmann--Schwarzbach:2004}. We will draw heavily from these works\footnote{The notion of a derived bracket can be traced back to J. L. Koszul (1990) in unpublished notes. The notion was also put forward by Th. Th. Voronov (1993), but remains unpublished.}. \\

On any differential super Lie algebra, $(A,[ , ],D)$, with bracket of even or odd parity, one can construct the ``derived bracket"

\begin{equation}
(a,b)_{D} := (-1)^{\widetilde{a} +1} [D(a),b].
\end{equation}

This bracket is not in general a Lie bracket (or a Lie antibracket) due to the lack of skewsymmetry. However the bracket does satisfy a form of the Jacobi identity. Such Lie algebras mod the skewsymmetry were first introduced by Loday under the name ``Leibniz algebras" \cite{Loday:1992,Loday:1993}, we will shortly be a little more precise here.\\

On an odd Jacobi manifold we have a differential Lie antialgebra, that is an odd Lie bracket provided by the odd Jacobi bracket, and a differential provided by the homological vector field. Then up to some conventions we arrive a the main theorem of this paper:\\

\noindent \textbf{Theorem 1.} \emph{Let $(M, S, Q)$ be an odd Jacobi manifold, then $C^{\infty}(M)$ comes equipped with a canonical Loday--Poisson bracket.}\\

In essence on an odd Jacobi manifold one has ``a Poisson bracket mod the skewsymmetry". We make the above theorem more precise and  then investigate the relations between the Hamiltonian vector fields given by the odd Jacobi and the Loday--Poisson structure. Interestingly, these relations are identical to the well known Cartan identities for the exterior derivative, interior product and Lie derivative acting on differential forms on a manifold.\\

\begin{remark}
Grabowski \& Marmo in \cite{Grabowski:2001B,Grabowski:2003B} prove that on a classical (purely even) manifold $N$ the Jacobi--Loday identity together with the (generalised) Leibniz rule for a binary  bracket on $C^{\infty}(N)$ imply the skewsymmetry: there can only be  Poisson or even Jacobi brackets on classical manifolds. In fact, they show that similar statements also hold for Nambu--Poisson brackets and  for generalisations of Lie algebroid brackets. However, here we consider supermanifolds, meaning that we have a $\mathbbmss{Z}_{2}$-grading and more importantly, non-trivial nilpotent functions which invalidate the conditions of their theorem.
\end{remark}

\noindent \textbf{Preliminaries} \\
All vector spaces and algebras will be $\mathbbmss{Z}_{2}$-graded.   We will generally  omit the prefix \emph{super}. By \emph{manifold} we will mean a \emph{smooth supermanifold}. We denote the Grassmann parity of an object by \emph{tilde}: $\widetilde{A} \in \mathbbmss{Z}_{2}$. By \emph{even} or \emph{odd} we will be referring explicitly to the Grassmann parity.\\

 A \emph{Poisson} $(\varepsilon = 0)$  or \emph{Schouten} $(\varepsilon = 1)$ \emph{algebra} is understood as a vector space $A$ with a bilinear associative multiplication and a bilinear operation $\{\bullet , \bullet\}_{\varepsilon}: A \otimes A \rightarrow A$ such that:
\begin{list}{}
\item \textbf{Grading} $\widetilde{\{a,b \}_{\varepsilon}} = \widetilde{a} + \widetilde{b} + \varepsilon$
\item \textbf{Skewsymmetry} $\{a,b\}_{\varepsilon} = -(-1)^{(\tilde{a}+ \varepsilon)(\tilde{b}+ \varepsilon)} \{b,a \}_{\varepsilon}$
\item \textbf{Jacobi Identity} $\displaystyle\sum\limits_{\textnormal{cyclic}(a,b,c)} (-1)^{(\tilde{a}+ \varepsilon)(\tilde{c}+ \varepsilon)}\{a,\{b,c\}_{\varepsilon}  \}_{\varepsilon}= 0$
\item \textbf{Leibniz Rule} $\{a,bc \}_{\varepsilon} = \{a,b \}_{\varepsilon}c + (-1)^{(\tilde{a} + \varepsilon)\tilde{b}} b \{a,c \}_{\varepsilon}$
\end{list} \vspace{10pt}
for all homogenous elements $a,b,c \in A$. Extension to inhomogeneous elements is via linearity. If the Leibniz rule does not hold identically, but is modified as
\begin{equation}
\{a,bc \}_{\varepsilon} = \{a,b \}_{\varepsilon}c + (-1)^{(\tilde{a} + \varepsilon)\tilde{b}} b \{a,c \}_{\varepsilon} - \{a ,\mathbbmss{1}  \}_{\varepsilon} bc,
\end{equation}

then we have even ($\epsilon = 0)$ or odd ($\epsilon = 1)$ \emph{Jacobi algebras}.\\

A manifold $M$ such that $C^{\infty}(M)$ is a Poisson/Schouten algebra is known as a \emph{Poisson/Schouten manifold}. In particular the cotangent of a manifold comes equipped with a canonical Poisson structure.\\

Let us employ   natural local coordinates $(x^{A}, p_{A})$ on $T^{*}M$, with $\widetilde{x}^{A} = \widetilde{A}$ and $\widetilde{p}_{A} = \widetilde{A}$. Local diffeomorphisms on $M$ induce vector  bundle automorphism on $T^{*}M$ of the form
\begin{equation}
\nonumber \overline{x}^{A} = \overline{x}^{A}(x), \hspace{30pt} \overline{p}_{A}  = \left(\frac{\partial x^{B}}{\partial \overline{x}^{A}}\right)p_{B}.
\end{equation}

We will in effect use the local description as a \emph{natural vector bundle} to define the cotangent bundle of a supermanifold.  The canonical Poisson bracket on the cotangent is given by

\begin{equation}
\{ F,G \} = (-1)^{\widetilde{A} \widetilde{F} + \widetilde{A}} \frac{\partial F}{\partial p_{A}}\frac{\partial G}{\partial x^{A}} - (-1)^{\widetilde{A}\widetilde{F}}\frac{\partial  F}{\partial x^{A}} \frac{\partial G}{\partial p_{A}}.
\end{equation}\\

A manifold equipped with an odd vector field $Q$, such that the non-trivial condition $Q^{2}= \frac{1}{2}[Q,Q]=0$ holds, is known as a \emph{Q-manifold}. The vector field $Q$ is known as a \emph{homological vector field} for obvious reasons. \\

Important in this paper is the notion of a Loday--Poisson bracket. We define such a bracket being ``a Poisson bracket but without the skewsymmetry". Note that the Jacobi identity, as presented above, for an even bracket (with or without the Leibniz rule) implies that the  adjoint endomorphism $f \rightsquigarrow \ad_{f} := \{f, \bullet \}$ is a derivation over the bracket itself.  That is
\begin{equation*}
\ad_{f}\{g,h \} =  \{\ad_{f}(g) ,h\} + (-1)^{\widetilde{f} \widetilde{g}} \{g, \ad_{f}(h)  \}.
\end{equation*}

Note that the above has direct meaning even if the skewsymmetry of the bracket is weakened. Thus, the most useful definition of the Jacobi identity  for an even bracket when the symmetry is lost is
\begin{equation}
\{f, \{g,h \}  \} = \{ \{f,g \},h  \} + (-1)^{\widetilde{f} \widetilde{g}}\{ g, \{f,h  \} \},
\end{equation}

We refer to this form of the Jacobi identity as the \emph{Jacobi--Loday identity} \cite{Loday:1992}. If we have an even bracket that satisfies the Jacobi--Loday identity and the (left) Leinbiz rule, but not necessarily the skewsymmetry then we call such a bracket a \emph{Loday--Poisson bracket}. For example, classical Poisson brackets on a manifold are specific examples of Loday--Poisson bracket, albeit skewsymmetric. Furthermore, we will not insist that a Loday--Poisson bracket is a bi-derivation, only that it satisfies the left Leibniz rule.

\begin{remark}
Much of this paper will generalise directly to QS-manifolds in the sense of Voronov \cite{Voronov:2001qf}. A primary example of  a QS-manifold is the antitangent bundle $\Pi TM$ of a Poisson manifold $M$. The Schouten structure is supplied by the Koszul--Schouten bracket and the homological vector field by the de Rham differential \cite{Koszul:1985}. The associated Loday--Poisson bracket is a natural extension of the Poisson bracket on $C^{\infty}(M)$ to differential forms over $M$, but note this extension is as a Loday bracket only \cite{Kosmann--Schwarzbach:1996}.  The constructions here will not directly generalise to even Jacobi supermanifolds due to incompatibility of the Grassmann parities. There is no canonical choice of a  homological Jacobi vector field on a even Jacobi supermanifold. On a classical Jacobi manifold there are no (non zero) odd vector fields at all.
\end{remark}

\newpage
\section{Odd Jacobi manifolds}

Lichnerowicz \cite{Lichnerowicz1977} introduced the notion of a Poisson manifold as well as a Jacobi manifold. Such  manifolds have found applications in classical mechanics and play an important role in quantisation. In this section we recall some of the basic notions as pertaining to odd Jacobi (super)manifolds. No proofs are given here and can be found in \cite{Bruce:2012}, or follow directly from the definitions.  For a modern discussion of Jacobi structures, including $\mathbbmss{Z}$-graded versions see \cite{Grabowski:2001,Grabowski:2003,Grabowski:2011}.

\begin{definition}
An \textbf{odd Jacobi structure} $(S,Q)$ on a manifold $M$  consists of
\begin{itemize}
\item an odd function $S \in C^{\infty}(T^{*}M)$, of degree two in fibre coordinates,
\item an odd vector field $Q \in \Vect(M)$,
\end{itemize}
such that the following conditions hold:
\begin{enumerate}
\item the homological condition $Q^{2} = \frac{1}{2} [Q,Q]=0$,
\item the invariance condition  $L_{Q}S = 0$,
\item the compatibility condition $\{S,S \}= - 2 \Q S $.
\end{enumerate}
 Here $\Q \in C^{\infty}(T^{*}M)$ is the principle symbol or ``Hamiltonian"  of the vector field $Q$. The brackets $\{ \bullet, \bullet \}$ are the canonical Poisson brackets on the cotangent bundle of the manifold.\\
A manifold equipped with an odd Jacobi structure $(S,Q)$ is known as an \textbf{odd Jacobi manifold}.
\end{definition}

\begin{definition}
The \textbf{odd Jacobi bracket} on $C^{\infty}(M)$ is defined as
\begin{eqnarray}
\SN{f,g}_{J} &=&  (-1)^{\widetilde{f}+1} \{ \{ S,f \},g    \} - (-1)^{\widetilde{f}+1} \{ \Q, fg \}\\
\nonumber &=&(-1)^{(\widetilde{B}+1)\widetilde{f}  +1} S^{BA} \frac{\partial f}{\partial x^{A}} \frac{\partial g}{\partial x^{B}} + (-1)^{\widetilde{f}} \left(Q^{A} \frac{\partial f}{\partial x^{A}}   \right)g  + f \left( Q^{A}\frac{\partial g}{\partial x^{A}}  \right),
\end{eqnarray}
with $f,g \in C^{\infty}(M)$.
\end{definition}

The odd Jacobi bracket makes the algebra of smooth functions on $M$ into an odd Jacobi algebra.

\begin{remark}
The definition of an odd Jacobi manifold given here is not quite the most general, one can include an odd function in the construction of an odd Jacobi structure, see Grabowski and Marmo \cite{Grabowski:2003} for details. Note that the definition of given by Grabowski and Marmo  coincide with that given here (up to conventions) up on setting the odd function to zero. Thus, there are examples of odd Jacobi brackets on supermanifolds not covered by the constructions here.
\end{remark}

\begin{definition}
Given a function $f \in C^{\infty}(M)$ the associated \textbf{Hamiltonian vector field} is given by
\begin{eqnarray}
f &\rightsquigarrow & X_{f} \in \Vect(M)\\
\nonumber X_{f}(g) &=& (-1)^{\widetilde{f}} \SN{f,g}_{J} - Q(f) g.
\end{eqnarray}
\end{definition}

Note that the homological vector field $Q$ is itself Hamiltonian with respect to the  unit constant function: $Q = X_{\mathbbmss{1}} = \SN{\mathbbmss{1}, \bullet}_{J}$.

\begin{definition}
A vector field $X \in \Vect(M)$ is said to be a \textbf{Jacobi vector field} if and only if
\begin{equation}
L_{X}S =  \{ \chi, S \} = 0 \hspace{35pt} \textnormal{and}  \hspace{15pt} L_{X}Q = \{\chi, \Q  \} =  0,
\end{equation}
where $\chi \in C^{\infty}(T^{*}M)$ is the symbol or ``Hamiltonian" of the vector field $X$. Note that the homological vector field $Q$ is a Jacobi vector field.
\end{definition}

\newpage

\begin{proposition}\label{prop:jacobivectorfields}
Let $X \in \Vect(M)$ be a vector field on an odd Jacobi manifold. Then the following are equivalent:
\begin{enumerate}
\item $X$ is a Jacobi vector field.
\item $X$ is a derivation over the odd Jacobi bracket;
\begin{equation}\nonumber
X(\SN{f,g}_{J}) = \SN{X(f), g}_{J} + (-1)^{\widetilde{X}(\widetilde{f}+1)} \SN{f,X(g)}_{J}.
\end{equation}
\item $[X, Y_{f}] = (-1)^{\widetilde{X}} Y_{X(f)}$, for all Hamiltonian vector fields $Y_{f}$.
\end{enumerate}
\end{proposition}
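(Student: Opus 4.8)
The plan is to reduce all three statements to a single computation inside the canonical Poisson algebra of $T^{*}M$. Write $\chi\in C^{\infty}(T^{*}M)$ for the symbol of $X$, so that $X$ acts on $C^{\infty}(M)\subset C^{\infty}(T^{*}M)$ as $X(f)=\{\chi,f\}$ and thereby extends to the derivation $\{\chi,\bullet\}$ of the whole Poisson algebra. I shall also freely use that $\{\bullet,\mathbbmss{1}\}=0$, that $\{\chi,S\}$ and $\{\chi,\Q\}$ are fibrewise of degrees $2$ and $1$ respectively, that $\widetilde{S}=\widetilde{\Q}=1$, and that the canonical bracket is even, so that $\{a,\{b,c\}\}=\{\{a,b\},c\}+(-1)^{\widetilde{a}\widetilde{b}}\{b,\{a,c\}\}$.

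First I would apply $\{\chi,\bullet\}$ to the defining formula $\SN{f,g}_{J}=(-1)^{\widetilde{f}+1}\{\{S,f\},g\}-(-1)^{\widetilde{f}+1}\{\Q,fg\}$ and move it inwards with the super-Jacobi identity (twice in the first summand, once in the second), using $\{\chi,f\}=X(f)$, $\{\chi,g\}=X(g)$ and $\{\chi,fg\}=X(f)g+(-1)^{\widetilde{X}\widetilde{f}}fX(g)$. Careful sign bookkeeping should deliver the master identity
\[
X(\SN{f,g}_{J})=\SN{X(f),g}_{J}+(-1)^{\widetilde{X}(\widetilde{f}+1)}\SN{f,X(g)}_{J}+(-1)^{\widetilde{f}+1}\Big(\{\{\{\chi,S\},f\},g\}-\{\{\chi,\Q\},fg\}\Big)
\]
for all $f,g\in C^{\infty}(M)$. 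This proves $(1)\Rightarrow(2)$ at once: if $X$ is a Jacobi vector field then $\{\chi,S\}=\{\chi,\Q\}=0$ and the ``anomaly'' term vanishes. For $(2)\Rightarrow(1)$ I would read the identity the other way: condition $(2)$ is equivalent to the anomaly vanishing for all $f,g$; putting $g=\mathbbmss{1}$ kills the $S$-part and leaves $\{\{\chi,\Q\},f\}=0$ for all $f$, and testing against the coordinate functions $x^{A}$ forces $\{\chi,\Q\}=0$; feeding this back, $\{\{\{\chi,S\},f\},g\}=0$ for all $f,g$, and testing against coordinates again forces $\{\chi,S\}=0$. Hence $X$ is a Jacobi vector field.

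For $(2)\Leftrightarrow(3)$ I would exploit $Q=Y_{\mathbbmss{1}}$ and $Q(\mathbbmss{1})=0$. Expanding $[X,Y_{f}](g)=X(Y_{f}(g))-(-1)^{\widetilde{X}(\widetilde{f}+1)}Y_{f}(X(g))$ from $Y_{f}(g)=(-1)^{\widetilde{f}}\SN{f,g}_{J}-Q(f)g$ and applying the Leibniz rule, the two terms proportional to $Q(f)X(g)$ cancel, leaving
\[
[X,Y_{f}](g)=(-1)^{\widetilde{f}}X(\SN{f,g}_{J})-(-1)^{\widetilde{X}(\widetilde{f}+1)+\widetilde{f}}\SN{f,X(g)}_{J}-X(Q(f))g ,
\]
whereas $(-1)^{\widetilde{X}}Y_{X(f)}(g)=(-1)^{\widetilde{f}}\SN{X(f),g}_{J}-(-1)^{\widetilde{X}}Q(X(f))g$. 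Taking $f=\mathbbmss{1}$ in either $(2)$ or $(3)$ gives $[X,Q]=0$, i.e.\ $X(Q(h))=(-1)^{\widetilde{X}}Q(X(h))$ for every $h$; granting this, subtracting the last two expressions shows that $[X,Y_{f}]=(-1)^{\widetilde{X}}Y_{X(f)}$ for all $f$ is equivalent to the derivation property $X(\SN{f,g}_{J})=\SN{X(f),g}_{J}+(-1)^{\widetilde{X}(\widetilde{f}+1)}\SN{f,X(g)}_{J}$ for all $f,g$. This gives $(2)\Leftrightarrow(3)$ and closes the loop.

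I expect the only delicate part to be the sign accounting in the iterated super-Jacobi identity that produces the master identity; the non-degeneracy step (``anomaly vanishes identically $\Rightarrow$ both symbol brackets vanish'') and all the manipulations behind $(2)\Leftrightarrow(3)$ are then routine cancellations.
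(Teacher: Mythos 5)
Your argument is correct, and it is precisely the ``follows directly from the definitions'' route that the paper alludes to (it defers the proof to \cite{Bruce:2012}): lift everything to the canonical Poisson algebra on $T^{*}M$, derive the master identity whose anomaly term is controlled by $\{\chi,S\}$ and $\{\chi,\Q\}$, and extract $(2)\Leftrightarrow(3)$ from the $f=\Id$ specialisation $[X,Q]=0$. The sign bookkeeping in your master identity and in the $[X,Y_{f}]$ computation checks out against the paper's conventions, so there is nothing to add.
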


\begin{proposition}\label{prop:Hamiltonianvectorfields}
A Hamiltonian vector field $X_{f} \in \Vect(M)$ is a Jacobi vector field if and only if $f \in C^{\infty}(M)$ is Q-closed.
\end{proposition}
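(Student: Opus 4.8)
The plan is to argue straight from the definition: $X_f$ is a Jacobi vector field exactly when its symbol $\chi_f \in C^{\infty}(T^{*}M)$ satisfies $\{\chi_f, S\}=0$ and $\{\chi_f,\Q\}=0$, with $\{\bullet,\bullet\}$ the canonical bracket on $T^{*}M$. So the first task is to pin down $\chi_f$. Starting from $X_f(g) = (-1)^{\widetilde{f}}\SN{f,g}_J - Q(f)\,g$ together with the definition of the odd Jacobi bracket, the two $\Q$-contributions recombine through $\{\Q,fg\}=Q(fg)$ and the summand $Q(f)\,g$ cancels, leaving $X_f(g) = -\{\{S,f\},g\} + (-1)^{\widetilde{f}}\,f\,Q(g)$; hence $\chi_f = -\{S,f\} + (-1)^{\widetilde{f}}\,f\,\Q$.

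Next I would compute $L_{X_f}S = \{\chi_f,S\}$. The term $\{\{S,f\},S\}$ is reduced, using the graded Jacobi identity for $\{\bullet,\bullet\}$, to a term built from $\{S,S\}$; feeding in the compatibility condition $\{S,S\} = -2\,\Q S$, the invariance $\{\Q,S\}=0$, and graded commutativity of $C^{\infty}(T^{*}M)$ to cancel the two $\Q\{S,f\}$-terms, I expect the clean identity $L_{X_f}S = Q(f)\,S$. The analogous computation of $L_{X_f}Q = \{\chi_f,\Q\}$, now using the homological condition $\{\Q,\Q\}=0$ together with $\{\Q,S\}=0$, should give $\{\chi_f,\Q\} = (-1)^{\widetilde{f}}\{S,Q(f)\} + Q(f)\,\Q$, which one recognises (up to an overall sign) as the symbol of the Hamiltonian vector field $X_{Q(f)}$. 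This second computation can be bypassed altogether: $Q$ is itself a Jacobi vector field, so Proposition~\ref{prop:jacobivectorfields}(3) applied to $X=Q$ gives $[Q,X_f] = -X_{Q(f)}$, whence $L_{X_f}Q = [X_f,Q] = \pm\,X_{Q(f)}$.

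Putting the two identities together, $X_f$ is a Jacobi vector field if and only if $Q(f)\,S = 0$ and $X_{Q(f)} = 0$. If $f$ is $Q$-closed then $Q(f)=0$ and both conditions hold trivially. Conversely, if $X_f$ is a Jacobi vector field then $X_{Q(f)}=0$; as $Q^{2}=0$ makes $Q(f)$ automatically $Q$-closed, this says $\SN{Q(f),\bullet}_J\equiv 0$, and together with $Q(f)\,S=0$ it forces $Q(f)=0$, i.e.\ $f$ is $Q$-closed. A more economical packaging of the same argument runs through Proposition~\ref{prop:jacobivectorfields}(2): write $X_f = (-1)^{\widetilde{f}}\SN{f,\bullet}_J - m_{Q(f)}$, where $m_{Q(f)}$ is the operator of multiplication by $Q(f)$; since $\SN{f,\bullet}_J$ is a derivation over $\SN{\bullet,\bullet}_J$ by the Jacobi identity for the odd Jacobi algebra, $X_f$ is a derivation over $\SN{\bullet,\bullet}_J$ --- equivalently a Jacobi vector field --- precisely when $m_{Q(f)}$ is, and that is the case exactly when $Q(f)=0$.

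The only genuine obstacle here is computational: tracking the Koszul signs through the nested canonical brackets in $\{\chi_f,S\}$ and $\{\chi_f,\Q\}$. Once $\chi_f$ is in hand and the three structure equations of $(M,S,Q)$ are substituted there is essentially no conceptual content left, and the shortcut through Proposition~\ref{prop:jacobivectorfields}(3) removes half of even that.
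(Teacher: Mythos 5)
Your reduction of the problem is sound and is the natural route: from $X_f(g) = -\{\{S,f\},g\} + (-1)^{\widetilde{f}} fQ(g)$ one reads off $\chi_f = -\{S,f\}+(-1)^{\widetilde{f}}f\Q$, and the three structure equations do give $L_{X_f}S = Q(f)\,S$ (up to an overall sign) and $L_{X_f}Q = [X_f,Q] = -(-1)^{\widetilde{f}}X_{Q(f)}$, the latter also available for free from Proposition~\ref{prop:jacobivectorfields}(3) applied to $X=Q$. The ``if'' direction then falls out immediately. (The paper itself gives no proof of this proposition --- it is quoted from the earlier reference --- so the comparison can only be against what a direct argument would have to contain.)

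The gap is in the last step of the converse. From ``$X_f$ is Jacobi'' you obtain only $Q(f)\,S = 0$ and $X_{Q(f)} = 0$, and you then assert that these ``force $Q(f)=0$''. That is precisely the point that needs an argument, and none is supplied: the assignment $h \rightsquigarrow X_h$ is not injective, and its kernel is not controlled by $Q(f)S=0$. Writing $h=Q(f)$ (automatically $Q$-closed), the condition $X_h=0$ says only that the vector field $\pm S^{BA}\partial_A h\,\partial_B + h\,Q$ vanishes, i.e.\ it constrains $h$ multiplied into the (possibly degenerate, possibly nilpotent-coefficient) tensors $S$ and $Q$, not $h$ itself. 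Concretely, on a pure Q-manifold ($S=0$) one has $X_h=(-1)^{\widetilde{h}}hQ$, so $X_{Q(f)}=\pm Q(f)\,Q$; on $M=\Pi T\mathbbmss{R}$ with $Q=\xi\,\partial_x$ this vanishes for \emph{every} $f$, because $Q(f)$ is proportional to $\xi$ and $\xi^2=0$, while $f=x$ is not $Q$-closed. So the implication you want cannot be purely formal --- it needs some non-degeneracy input on $(S,Q)$ that you do not state (and that the paper's bare statement does not state either). The ``more economical packaging'' has exactly the same hole: the claim that $m_{Q(f)}$ is a derivation over $\SN{\bullet,\bullet}_J$ \emph{exactly} when $Q(f)=0$ is the same unproven converse restated, and it fails in the degenerate situation above. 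Everything before that final step is correct; the proof is incomplete, not wrong in its computations.
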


\begin{proposition}\label{prop:morphism}
The assignment $f \rightsquigarrow X_{f}$ is a morphism  between the odd Lie algebra on $C^{\infty}(M)$ provided by the odd Jacobi brackets and the Lie algebra of vector fields. Specifically, the following holds:
\begin{equation*}
[X_{f}, X_{g}] = - X_{\SN{f,g}_{J}}
\end{equation*}
for all $f,g \in C^{\infty}(M)$.
\end{proposition}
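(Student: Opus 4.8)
The plan is to prove the identity by showing that the two vector fields agree as operators on $C^{\infty}(M)$, i.e. that $[X_f,X_g](h) = -X_{\SN{f,g}_J}(h)$ for an arbitrary $h \in C^{\infty}(M)$. First I would record the parities $\widetilde{X_f} = \widetilde f + 1$ and $\widetilde{\SN{f,g}_J} = \widetilde f + \widetilde g + 1$, so that $[X_f,X_g] = X_f\circ X_g - (-1)^{(\widetilde f+1)(\widetilde g+1)}X_g\circ X_f$, and then substitute the defining formula $X_f(g) = (-1)^{\widetilde f}\SN{f,g}_J - Q(f)\,g$ twice into $X_f(X_g(h))$. The only term of the expansion that is not already ``linear'' in the odd Jacobi bracket is $\SN{f, Q(g)\,h}_J$, which I would break up with the modified Leibniz rule of the odd Jacobi algebra, using the value $\SN{f,\mathbbmss{1}}_J = (-1)^{\widetilde f}Q(f)$ (this is just $Q = X_{\mathbbmss{1}} = \SN{\mathbbmss{1},\bullet}_J$ combined with skewsymmetry).

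Next I would sort the resulting terms into three families. The pure products $Q(f)Q(g)\,h$ cancel against their $f\leftrightarrow g$ counterparts because pointwise multiplication is graded commutative; the mixed terms $Q(f)\SN{g,h}_J$ and $Q(g)\SN{f,h}_J$ likewise cancel in pairs once the signs are tracked; and the genuinely second-order terms $\SN{f,\SN{g,h}_J}_J$ and $\SN{g,\SN{f,h}_J}_J$ collapse, via the Jacobi identity for the odd Jacobi bracket written in the form ``$\ad_f = \SN{f,\bullet}_J$ is a derivation of parity $\widetilde f+1$'',
\[
\SN{f,\SN{g,h}_J}_J = \SN{\SN{f,g}_J,\,h}_J + (-1)^{(\widetilde f+1)(\widetilde g+1)}\SN{g,\SN{f,h}_J}_J,
\]
into the single term $(-1)^{\widetilde f+\widetilde g}\SN{\SN{f,g}_J,\,h}_J$. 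The leftover terms $\SN{f,Q(g)}_J\,h$ and $\SN{g,Q(f)}_J\,h$ I would rewrite using skewsymmetry and the fact that $Q$ is a derivation over the odd Jacobi bracket (Proposition~\ref{prop:jacobivectorfields} applied to the Jacobi vector field $Q$),
\[
Q(\SN{f,g}_J) = \SN{Q(f),g}_J + (-1)^{\widetilde f+1}\SN{f,Q(g)}_J,
\]
which fuses them into $Q(\SN{f,g}_J)\,h$.

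Assembling the surviving pieces yields
\[
[X_f,X_g](h) = (-1)^{\widetilde f+\widetilde g}\SN{\SN{f,g}_J,\,h}_J + Q(\SN{f,g}_J)\,h,
\]
and comparing with the definition of the Hamiltonian vector field, $X_{\phi}(h) = (-1)^{\widetilde\phi}\SN{\phi,h}_J - Q(\phi)h$, with $\phi = \SN{f,g}_J$ of parity $\widetilde f+\widetilde g+1$, the right-hand side is exactly $-X_{\SN{f,g}_J}(h)$. The only real obstacle is the sign bookkeeping: since the bracket is odd, both $\ad_f$ and $X_f$ carry parity $\widetilde f+1$, and one must keep in mind that the odd Jacobi bracket obeys only the \emph{modified} Leibniz rule, so the correction term $-\SN{f,\mathbbmss{1}}_J\,gh$ genuinely contributes to several of the cancellations above; everything else is a routine (if long) expansion. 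As an alternative one could simply run the whole computation in the natural coordinates $(x^A,p_A)$ with the explicit component formula for $\SN{f,g}_J$, trading the conceptual inputs for index manipulations.
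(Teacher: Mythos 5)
Your argument is correct: the expansion of $X_f\circ X_g$ via the modified Leibniz rule (with $\SN{f,\Id}_J=(-1)^{\widetilde f}Q(f)$), the cancellation of the symmetric $Q(f)Q(g)h$ and mixed terms in the graded commutator, the Loday form of the odd Jacobi identity for the nested brackets, and the fusion of the $\SN{f,Q(g)}_Jh$ and $\SN{g,Q(f)}_Jh$ terms into $Q(\SN{f,g}_J)h$ using that $Q$ is a Jacobi vector field all check out, and the result matches $-X_{\SN{f,g}_J}(h)=(-1)^{\widetilde f+\widetilde g}\SN{\SN{f,g}_J,h}_J+Q(\SN{f,g}_J)h$. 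The paper itself omits the proof (deferring to the reference \cite{Bruce:2012}), and your direct verification is exactly the computation that "follows from the definitions" as claimed there.
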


\begin{proposition}\label{prop:product}
On an odd Jacobi manifold the following identity holds:
\begin{equation*}
X_{fg} = (-1)^{\widetilde{f}} f \: X_{g} + (-1)^{\widetilde{g}(\widetilde{f}+1)}g \: X_{f} + (-1)^{\widetilde{f} + \widetilde{g}+1} f g \: Q.
\end{equation*}
\end{proposition}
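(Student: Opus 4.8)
The plan is to trade the definition of the Hamiltonian vector field for a manifestly first-order local expression, and then to reduce the claimed identity to the graded Leibniz rule for partial derivatives. First I would substitute the explicit local form of the odd Jacobi bracket into $X_{f}(g) = (-1)^{\widetilde{f}}\SN{f,g}_{J} - Q(f)g$. The term $(-1)^{\widetilde{f}}\bigl(Q^{A}\tfrac{\partial f}{\partial x^{A}}\bigr)g = (-1)^{\widetilde{f}}Q(f)g$ produced by the bracket is cancelled by $-Q(f)g$, leaving
\[
X_{f} \;=\; (-1)^{\widetilde{B}\widetilde{f}+1}\, S^{BA}\,\frac{\partial f}{\partial x^{A}}\,\frac{\partial}{\partial x^{B}} \;+\; (-1)^{\widetilde{f}}\, f\, Q .
\]
This intrinsic formula already shows that $X_{f}$ is a bona fide vector field (it is first order in its argument) and that it specialises to $X_{\mathbbmss{1}} = Q$.

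Next I would apply this formula to the product $fg$ and expand $\tfrac{\partial}{\partial x^{A}}(fg) = \tfrac{\partial f}{\partial x^{A}}\,g + (-1)^{\widetilde{A}\widetilde{f}}\, f\,\tfrac{\partial g}{\partial x^{A}}$. This breaks the $S$-part of $X_{fg}$ into a piece carrying $\partial f$ and a piece carrying $\partial g$, while its $Q$-part is just $(-1)^{\widetilde{f}+\widetilde{g}}fg\,Q$. In parallel I would expand the proposed right-hand side $(-1)^{\widetilde{f}}fX_{g} + (-1)^{\widetilde{g}(\widetilde{f}+1)}gX_{f} + (-1)^{\widetilde{f}+\widetilde{g}+1}fg\,Q$ with the same formula, and compare the three types of terms separately: the $Q$-terms combine as $(-1)^{\widetilde{f}+\widetilde{g}}(1+1-1)fg\,Q$ (using $gf = (-1)^{\widetilde{f}\widetilde{g}}fg$), reproducing the $Q$-part of $X_{fg}$; the $\partial g$-term of $X_{fg}$ matches $(-1)^{\widetilde{f}}fX_{g}$ after commuting $f$ past the odd symbol $S^{BA}$; and the $\partial f$-term of $X_{fg}$ matches $(-1)^{\widetilde{g}(\widetilde{f}+1)}gX_{f}$ after commuting $g$ past $S^{BA}\tfrac{\partial f}{\partial x^{A}}$. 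Since all three identifications are term-by-term equalities of vector fields, this proves the claim.

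The only real work is the Koszul-sign bookkeeping, and the step I expect to be the main obstacle is getting the prefactor $(-1)^{\widetilde{g}(\widetilde{f}+1)}$ of $gX_{f}$ exactly right: since $X_{f}$ carries the Grassmann parity $\widetilde{f}+1$ of the odd Jacobi bracket, moving $g$ through it costs the extra unit beyond the naive $(-1)^{\widetilde{f}\widetilde{g}}$, which is precisely the source of that sign. If one prefers to avoid coordinates, the same result can be obtained by testing both sides against an arbitrary $h \in C^{\infty}(M)$: use graded skewsymmetry to move $fg$ into the second slot of $\SN{\bullet,\bullet}_{J}$, apply the modified (Jacobi-algebra) Leibniz rule there, move back, expand $Q(fg)$ by the Leibniz rule for the odd vector field $Q$, and finally re-express $\SN{f,\bullet}_{J}$, $\SN{g,\bullet}_{J}$ and $\SN{\mathbbmss{1},\bullet}_{J}=Q$ via the definition of the Hamiltonian vector field; collecting terms yields the stated formula.
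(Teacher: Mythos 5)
Your proof is correct, and it is exactly the kind of argument the paper intends: the proposition is stated without proof here (deferred to \cite{Bruce:2012} with the remark that such statements ``follow directly from the definitions''), and your reduction of $X_{f}$ to the intrinsic first-order form $X_{f} = (-1)^{\widetilde{B}\widetilde{f}+1}S^{BA}\tfrac{\partial f}{\partial x^{A}}\tfrac{\partial}{\partial x^{B}} + (-1)^{\widetilde{f}}fQ$, followed by the graded Leibniz expansion of $\tfrac{\partial}{\partial x^{A}}(fg)$, checks out sign-for-sign (as does your coordinate-free alternative via the modified Leibniz rule and skewsymmetry of $\SN{\bullet,\bullet}_{J}$). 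The only cosmetic slip is the phrase ``the term $(-1)^{\widetilde{f}}Q(f)g$ produced by the bracket is cancelled by $-Q(f)g$'': the cancellation happens only after that term is multiplied by the overall prefactor $(-1)^{\widetilde{f}}$ in the definition of $X_{f}$, but your resulting formula is the correct one.
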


\section{Loday--Poisson brackets}\label{sect:mainresult}
We are now in a position to state the main theorem of this paper. We draw heavily on the notion of derived brackets in the sense of Kosmann--Schwarzbach,  see \cite{Kosmann--Schwarzbach:1996,Kosmann--Schwarzbach:2004} for details.

\begin{theorem}\label{theo:main}
Let $(M, S, Q)$ be an odd Jacobi manifold, then $C^{\infty}(M)$ comes equipped with a canonical Loday--Poisson bracket.
\end{theorem}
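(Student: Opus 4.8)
The plan is to realise the Loday--Poisson bracket as the Kosmann--Schwarzbach derived bracket of the odd Jacobi bracket with respect to the homological vector field. Concretely, I would set
\begin{equation*}
\{f,g\}_{LP} := (-1)^{\widetilde{f}+1}\SN{Q(f),g}_{J}, \qquad f,g \in C^{\infty}(M),
\end{equation*}
and first dispose of the bookkeeping: since $\widetilde{Q(f)} = \widetilde{f}+1$ and the odd Jacobi bracket carries parity $\widetilde{a}+\widetilde{b}+1$, the bracket $\{\bullet,\bullet\}_{LP}$ is even, as an even Loday--Poisson bracket should be; and it is entirely determined by $Q$ on constants. So the precise form of the theorem to be established is that $(C^{\infty}(M),\cdot\,,\{\bullet,\bullet\}_{LP})$ satisfies the Jacobi--Loday identity and the left Leibniz rule over the (commutative) product of $C^{\infty}(M)$.

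For the left Leibniz rule I would expand $\{f,gh\}_{LP}$ using the \emph{modified} Leibniz rule obeyed by the odd Jacobi bracket, recalling that $C^{\infty}(M)$ is an odd Jacobi algebra (not a Schouten algebra):
\begin{equation*}
\SN{a,bc}_{J} = \SN{a,b}_{J}c + (-1)^{(\widetilde{a}+1)\widetilde{b}}\, b \SN{a,c}_{J} - \SN{a,\Id}_{J}\, bc .
\end{equation*}
Taking $a = Q(f)$, the only obstruction to a genuine Leibniz rule for $\{\bullet,\bullet\}_{LP}$ is the correction term $\SN{Q(f),\Id}_{J}\,gh$. Using $\SN{\phi,\Id}_{J} = (-1)^{\widetilde{\phi}}Q(\phi)$ — which follows from $Q = \SN{\Id,\bullet}_{J}$ and skewsymmetry of the odd Jacobi bracket — this correction equals $(-1)^{\widetilde{f}+1}Q^{2}(f)\,gh$, hence vanishes because $Q^{2}=0$. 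What remains is exactly $\{f,gh\}_{LP} = \{f,g\}_{LP}h + (-1)^{\widetilde{f}\widetilde{g}}g\,\{f,h\}_{LP}$. This is the conceptual heart of the theorem: passing to the derived bracket and invoking $Q^{2}=0$ trades the inhomogeneous Jacobi correction for a genuine one-sided Leibniz rule, producing ``a Poisson bracket mod skewsymmetry''. I would stress that only the \emph{left} Leibniz rule survives; the right one genuinely fails, since the analogous correction term $\SN{h,\Id}_{J}\,Q(f)g$ does not drop out — which is precisely why the bracket is Loday--Poisson rather than Poisson.

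For the Jacobi--Loday identity I would run the standard derived-bracket argument, keeping track of signs. The inputs are: (i) the graded Jacobi identity for $\SN{\bullet,\bullet}_{J}$ in ``Leibniz form'', i.e. $\SN{a,\bullet}_{J}$ is a derivation of $\SN{\bullet,\bullet}_{J}$; (ii) Proposition~\ref{prop:jacobivectorfields}, whereby $Q$, being a Jacobi vector field, is a derivation of the odd Jacobi bracket; and (iii) $Q^{2}=0$. Combining (ii) and (iii) yields the key relation $\SN{Q(f),Q(g)}_{J} = -\,Q\big(\{f,g\}_{LP}\big)$ up to the universal sign; expanding $\{f,\{g,h\}_{LP}\}_{LP}$ by (i) and re-expressing each summand through $\{\bullet,\bullet\}_{LP}$ then produces $\{\{f,g\}_{LP},h\}_{LP} + (-1)^{\widetilde{f}\widetilde{g}}\{g,\{f,h\}_{LP}\}_{LP}$, which is the claim. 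I expect the only real difficulty to be sign management: the odd parity of $\SN{\bullet,\bullet}_{J}$ forces the shifted parities $\widetilde{a}+1$ everywhere, and the composition $Q(f)$ shifts once more, so a careful once-and-for-all normalisation is needed to make the two computations close cleanly. Beyond $Q^{2}=0$ and the derivation property of $Q$ there is no conceptual obstacle; the rest is the Kosmann--Schwarzbach derived-bracket machinery specialised to $(C^{\infty}(M),\SN{\bullet,\bullet}_{J},Q)$.
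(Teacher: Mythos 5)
Your proposal is correct and follows essentially the same route as the paper: the bracket is defined as the Kosmann--Schwarzbach derived bracket $(-1)^{\widetilde{f}+1}\SN{Q(f),g}_{J}$, the Jacobi--Loday identity is obtained from the Jacobi identity for $\SN{\bullet,\bullet}_{J}$ together with $Q$ being a Jacobi vector field and $Q^{2}=0$, and the left Leibniz rule follows from the modified Leibniz rule once the correction term $\SN{Q(f),\Id}_{J}\,gh = \pm Q^{2}(f)\,gh$ is seen to vanish. Your remarks on parity bookkeeping and on why only the left Leibniz rule survives match the paper's discussion.
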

\begin{proof}
We prove the theorem by direct construction of the Loday--Poisson bracket. The bracket is canonical up to minor issues of conventions. Following Kosmann--Schwarzbach \cite{Kosmann--Schwarzbach:1996} we define the Loday--Poisson bracket as
\begin{equation*}
\{f,g \}_{J} :=  (-1)^{\widetilde{f}+1}  \SN{Q(f),g}_{J}.
\end{equation*}
As the vector field $Q$ is homological and Jacobi, together with the Jacobi identify for the odd Jacobi bracket the above bracket is \emph{even} and satisfies the  \emph{Jacobi--Loday identity}. For completeness we outline the steps here and urge the reader to consult  \cite{Kosmann--Schwarzbach:1996}.\\

From the definitions and the Jacobi identity for the odd Jacobi bracket we have
\begin{eqnarray}
\nonumber \{f, \{g,h  \}_{J} \}_{J} &=& (-1)^{\widetilde{f} + \widetilde{g}} \SN{Q(f) , \SN{Q(g),h}_{J}}_{J}\\
\nonumber &=& (-1)^{\widetilde{f} +\widetilde{g}} \left( \SN{\SN{Q(f), Q(g)}_{J},h}_{J}+(-1)^{\widetilde{f} \widetilde{g}} \SN{Q(g), \SN{Q(f), h}_{J}}_{J} \right).
\end{eqnarray}

Using the fact that the homological vector field is a Jacobi vector field we have

\begin{eqnarray}
\nonumber \{f, \{g,h  \}_{J} \}_{J} &=& (-1)^{\widetilde{g}+ \widetilde{f} + 1} \SN{(-1)^{\widetilde{f}+1}Q\left( \SN{Q(f), g}_{J} \right), h}_{J} + (-1)^{\widetilde{f}\widetilde{g} +\widetilde{f}+ \widetilde{g}} \SN{Q(g), \SN{Q(f), h}_{J}}_{J} \\
\nonumber &-& (-1)^{\widetilde{g}} \SN{\SN{Q^{2}(f), g}_{J}, h}_{J}.
\end{eqnarray}
As $Q^{2}=0$ the above  gives

\begin{equation*}
\{f, \{g,h \}_{J}  \}_{J} = \{ \{f,g \}_{J},h  \}_{J} + (-1)^{\widetilde{f} \widetilde{g}}\{ g, \{f,h  \}_{J} \}_{J}.
\end{equation*}

Note however, the bracket is not automatically skewsymmetric.  The Leibniz rule follows from the definitions directly:
\begin{eqnarray}
\nonumber \{f, gh \}_{J} &=& (-1)^{\widetilde{f}+1}\SN{Q(f), gh}_{J},\\
\nonumber &=& (-1)^{\widetilde{f}+1} \SN{Q(f), g}_{J}h + (-1)^{\widetilde{f}+1 + \widetilde{f} \widetilde{g}}g \SN{Q(f),h}_{J},\\
\nonumber &=& \{ f,g \}_{J} h + (-1)^{\widetilde{f} \widetilde{g}}g \{ f,h \}_{J} - \SN{Q(f), \Id}_{J}gh,
\end{eqnarray}
which follows from the modified Leibniz rule for the odd Jacobi bracket and the definition of the Loday--Poisson bracket. Then as $\SN{f, \Id}_{J} = (-1)^{\widetilde{f}}Q(f)$ we have $\SN{Q(f), \Id}_{J} = \pm Q(Q(f)) =0$ as $Q$ is homological. This could also be shown via direct application of the Jacobi identity for the odd Jacobi bracket. Then
\begin{equation*}
 \{f, gh \}_{J} = \{ f,g \}_{J} h + (-1)^{\widetilde{f} \widetilde{g}}g \{ f,h \}_{J}.
\end{equation*}
\end{proof}

\begin{remark}
Note that the centre of the odd Jacobi algebra consists entirely of $Q$-closed functions. This is evident as we have $\SN{f, \Id}_{J}=(-1)^{\widetilde{f}}Q(f)$. Thus, the restriction of the  Loday--Poisson bracket to the centre of $\left(C^{\infty}(M), \SN{\bullet,\bullet}_{J}\right)$ is identically zero, or in other words the trivial Poisson bracket.
\end{remark}

In local coordinates the Loday--Poisson bracket is given by

\begin{eqnarray}
\nonumber \{f,g  \}_{J} &=& (-1)^{\widetilde{B}(\widetilde{f}+1) +1} \left((-1)^{\widetilde{A}} S^{BA}Q^{C} \frac{\partial^{2} f}{\partial x^{C} \partial x^{A}} + S^{BA}\frac{\partial Q^{C}}{\partial x^{A}}\frac{\partial f}{\partial x^{C}} \right. \\
  &-&\left.  Q^{B} Q^{A} \frac{\partial f}{\partial x^{A}}\right)\frac{\partial g}{\partial x^{B}}.
\end{eqnarray}

Directly from this local expression  we see  that the bracket satisfies the left Leibniz rule  and is not skewsymmetric, in general.  From the definitions it is clear that

\begin{equation}
\{f,g \}_{J} - (-1)^{\widetilde{f} \widetilde{g}} \{g, f  \}_{J} = (-1)^{\widetilde{f}+1}Q \left( \SN{f,g}_{J} \right).
\end{equation}

\newpage
\section{Some simple examples}

In this section we briefly present four examples of odd Jacobi manifolds and the Loday--Poisson brackets associated with them. These examples are taken straight from \cite{Bruce:2012}.

\subsection*{Schouten manifolds}
Schouten manifolds can be considered as odd Jacobi manifolds with the homological vector field being the zero vector. In this case the corresponding Loday--Poisson bracket is also trivial,
\begin{equation*}
\{f,g \}_{J} = 0,
\end{equation*}
for all $f,g \in C^{\infty}(M)$. Examples of Schouten manifolds include odd symplectic manifolds, which have found applications in physics via the Batalin--Vilkovisky formalism.

\subsection*{Q-manifolds}
Q-manifolds are understood as odd Jacobi manifolds with the almost Schouten structure being trivial, $S=0$. Q-manifolds have found important applications in the Batalin--Vilkovisky formalism along side Schouten structures, \cite{Alexandrov:1997,Schwartz:1993}. The  odd Jacobi bracket  on a Q-manifold is then given by

\begin{eqnarray}
\nonumber \SN{f,g}_{Q} &=& (-1)^{\widetilde{f}} Q(fg)\\
\nonumber &=& (-1)^{\widetilde{f}}Q^{A} \frac{\partial f}{\partial x^{A}}g  + f Q^{A} \frac{\partial g}{\partial x^{A}}.
\end{eqnarray}

The associated Loday--Poisson bracket is thus

\begin{equation}
\{f,g \}_{Q} = (-1)^{\widetilde{f}+1}Q(f)Q(g),
\end{equation}

which is in fact skewsymmetric and thus a genuine Poisson bracket. One can see this directly or from the fact that $Q(\SN{f,g}_{Q})=0$.  In local coordinates we have

\begin{equation}
\{f,g \}_{Q} = (-1)^{\widetilde{B}(\widetilde{f}+1)}Q^{B}Q^{A}\frac{\partial f}{\partial x^{A}}\frac{\partial g}{\partial x^{B}}.
\end{equation}

\begin{remark}
There is nothing really new here. In essence all we have is a Poisson structure (bi-vector) given by $P =  \pm \frac{1}{2}(\varsigma Q)^{2} $ where $\varsigma : \Vect(M) \rightarrow C^{\infty}(\Pi T^{*}M)$ is the odd isomorphism between vector fields and ``one-vectors".  Note that $\varsigma Q$ is now even and the homological property becomes $\SN{\varsigma Q, \varsigma Q}=0$, where the bracket here is the canonical Schouten--Nijenhuis bracket. Thus due to the Leibniz rule is is clear that $\SN{P,P} =0$ and we have a Poisson structure.  One could also build higher Poisson structures from Q-manifolds in this way.
\end{remark}

In the  next example we consider Lie algebroids as Q-manifolds with an extra grading, known as weight, on the local coordinates.

\section*{Lie algebroids}

Recall that a Lie algebroid $E \rightarrow M$ can be described as a weight one homological vector field on the total space of $\Pi E$. This understanding in terms of a homological vector field is attributed to Va$\breve{\textrm{{\i}}}$ntrob \cite{Vaintrob:1997}. The weight is assigned as zero to the base coordinates and one to the fibre coordinates. In natural local coordinates $(x^{A}, \xi^{\alpha})$ the homological vector field is of the form

\begin{equation*}
Q = \xi^{\alpha}Q_{\alpha}^{A}(x) \frac{\partial}{\partial x^{A}} + \frac{1}{2} \xi^{\alpha} \xi^{\beta} Q_{\beta \alpha}^{\gamma}(x)\frac{\partial}{\partial \xi^{\gamma}} \in \Vect(\Pi E).
\end{equation*}

The associated weight one odd Jacobi bracket is given by

\begin{eqnarray}
\nonumber \SN{\phi, \psi}_{E} &=& (-1)^{\widetilde{\phi}} \left( \xi^{\alpha}Q_{\alpha}^{A}(x) \frac{\partial \phi}{\partial x^{A}} + \frac{1}{2} \xi^{\alpha} \xi^{\beta} Q_{\beta \alpha}^{\gamma}(x)\frac{\partial \phi}{\partial \xi^{\gamma}} \right)\psi\\
  &+& \phi \left( \xi^{\alpha}Q_{\alpha}^{A}(x) \frac{\partial \psi}{\partial x^{A}} + \frac{1}{2} \xi^{\alpha} \xi^{\beta} Q_{\beta \alpha}^{\gamma}(x)\frac{\partial \psi}{\partial \xi^{\gamma}}  \right),
\end{eqnarray}

where $\phi, \psi \in C^{\infty}(\Pi E)$ are ``Lie algebroid differential forms". The weight two Poisson bracket is given in local coordinates by

\begin{eqnarray}
\nonumber \{ \phi, \psi \}_{E} &=& (-1)^{\widetilde{B}\widetilde{\phi} + (\widetilde{B}+1)\widetilde{\gamma}} \xi^{\gamma} \xi^{\alpha}Q_{\alpha}^{B}Q_{\gamma}^{A}\frac{\partial \phi}{\partial x^{A}}\frac{\partial \psi}{\partial x^{B}}\\
\nonumber &+&\frac{1}{2}\xi^{\gamma}\xi^{\delta} \xi^{\alpha}\left( (-1)^{\widetilde{B}(\widetilde{\phi}+1) +(\widetilde{\gamma}+ \widetilde{\delta})(\widetilde{B}+1)} Q^{B}_{\alpha}Q_{\delta \gamma}^{\epsilon} \frac{\partial \phi}{\partial \xi^{\epsilon}}\frac{\partial \psi}{\partial x^{B}}\right.\\
\nonumber &+& \left. (-1)^{(\widetilde{\epsilon} +1)(\widetilde{\phi}+1) + \widetilde{\epsilon}(\widetilde{\gamma}+1)}Q_{\alpha \delta}^{\epsilon}Q_{\gamma}^{B}\frac{\partial \phi}{\partial x^{B}}\frac{\partial \psi}{\partial \xi^{\epsilon}}\right)\\
&+& (-1)^{(\widetilde{\beta}+1)(\widetilde{\phi}+1) + \widetilde{\beta}(\widetilde{\epsilon}+ \widetilde{\rho})}\frac{1}{4}\xi^{\epsilon}\xi^{\rho}\xi^{\gamma}\xi^{\delta}Q_{\delta \gamma}^{\beta}Q_{\rho \epsilon}^{\alpha}\frac{\partial \phi}{\partial \xi^{\alpha}} \frac{\partial \psi}{\partial \xi^{\beta}}.
\end{eqnarray}

\begin{remark}
The constructions here directly generalise to $L_{\infty}$-algebroids, understood as the pair $(\Pi E, Q)$, where $Q \in \Vect(\Pi E)$ is a homological vector field now inhomogeneous in weight.
\end{remark}

\subsection*{Odd contact manifolds}
Consider the manifold $M= \Pi T^{*}N \times \mathbbmss{R}^{0|1}$, where $N$ is a pure even classical manifold. Let us equip this manifold with natural local coordinates $(x^{a}, x^{*}_{a}, \tau)$, where $(x^{*}_{a})$ are fibre coordinates on $\Pi T^{*}N$, which are Grassmann odd and $\tau$ is the coordinate on  the factor  $\mathbbmss{R}^{0|1}$.\\

The manifold $M$ is an odd contact manifold. That is it comes with an odd contact form given by
\begin{equation*}
\alpha = d \tau - x^{*}_{a}dx^{a}.
\end{equation*}

Associated with this odd contact form is an odd Jacobi structure given by
\begin{eqnarray}
\nonumber S &=& p_{*}^{a}(p_{a} + x^{*}_{a}\pi ) \in C^{\infty}(T^{*}M),\\
\nonumber Q &=& - \frac{\partial}{\partial \tau} \in \Vect(M),
\end{eqnarray}

where we have employed fibre coordinates $(p_{a}, p_{*}^{a}, \pi)$ on $T^{*}M$. The corresponding odd Jacobi bracket is given by

\begin{eqnarray}
\nonumber \SN{f,g}_{J} &=& (-1)^{\widetilde{f}+ 1} \frac{\partial f}{\partial x^{*}_{a}} \frac{\partial g}{\partial x^{a}} - \frac{\partial f}{\partial x^{a}} \frac{\partial g}{\partial x^{*}_{a}}\\
\nonumber &+& x^{*}_{a} \frac{\partial f}{\partial x^{*}_{a}} \frac{\partial g}{\partial \tau} - (-1)^{\widetilde{f} + 1} \frac{\partial f}{\partial \tau} x^{*}_{a}\frac{\partial g}{\partial x^{*}_{a}}\\
\nonumber &+& f \frac{\partial g}{\partial \tau} - (-1)^{\widetilde{f}+1} \frac{\partial f}{\partial \tau} g.
\end{eqnarray}

Then the Loday--Poisson bracket is given by

\begin{eqnarray}
\{ f,g \}_{J} &=& \frac{\partial^{2}f}{\partial x^{*}_{a}\partial \tau} \frac{\partial g}{\partial x^{a}} - (-1)^{\widetilde{f}} \frac{\partial^{2}f}{\partial x^{a} \partial \tau} \frac{\partial g}{\partial x^{*}_{a}}\\
\nonumber &+& (-1)^{\widetilde{f}} x^{*}_{a} \frac{\partial^{2}f }{\partial x^{*}_{a} \partial \tau}\frac{\partial g}{\partial \tau} + (-1)^{\widetilde{f}} \frac{\partial f}{\partial \tau} \frac{\partial g}{\partial \tau}.
\end{eqnarray}

\section{Hamiltonian vector fields}
We  now continue this paper with a study of the algebraic relations between the Hamiltonian vector fields with respect to the odd Jacobi bracket and the (even) Loday--Poisson bracket.

\begin{definition}\label{def:Ham}
Let $f \in C^{\infty}(M)$ be an arbitrary function. The associated \textbf{Hamiltonian vector field} with respect to the Loday--Poisson bracket $Y_{f} \in \Vect(M)$ is defined viz
\begin{equation*}
Y_{f}(g) = \{f,g  \}_{J}
\end{equation*}
\end{definition}

Throughout this section we will denote Hamiltonian vector fields with respect to the odd Jacobi structure as $X_{f}$ and those with respect to the Loday--Poisson structure as $Y_{f}$ in order to distinguish the two.  Note that $\widetilde{X_{f}} = \widetilde{f} +1$ and that $\widetilde{Y_{f}} = \widetilde{f}$.

\begin{lemma}\label{lem:Bihamiltonian}
Let $Y_{f}$ be the Hamiltonian vector fields associated with the function $f \in C^{\infty}(M)$ with respect to the  Loday--Poisson bracket. Then
\begin{equation*}
Y_{f} = X_{Q(f)} = - \left[  Q, X_{f}\right],
\end{equation*}
where $X_{f}$ is the Hamiltonian vector field associated with $f$ with respect to the odd Jacobi structure.
\end{lemma}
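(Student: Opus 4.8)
The plan is to establish the two claimed identities $Y_f = X_{Q(f)}$ and $X_{Q(f)} = -[Q, X_f]$ separately, working at the level of the action of vector fields on an arbitrary test function $g \in C^{\infty}(M)$.

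For the first identity, I would start from Definition~\ref{def:Ham}, $Y_f(g) = \{f,g\}_J$, and substitute the definition of the Loday--Poisson bracket from the proof of Theorem~\ref{theo:main}, namely $\{f,g\}_J = (-1)^{\widetilde{f}+1}\SN{Q(f),g}_J$. On the other side, I would expand $X_{Q(f)}(g)$ using the Definition of the Hamiltonian vector field associated with the odd Jacobi structure: $X_h(g) = (-1)^{\widetilde{h}}\SN{h,g}_J - Q(h)g$ with $h = Q(f)$, noting $\widetilde{Q(f)} = \widetilde{f}+1$. This gives $X_{Q(f)}(g) = (-1)^{\widetilde{f}+1}\SN{Q(f),g}_J - Q(Q(f))g$, and since $Q^2 = 0$ the second term vanishes, matching $Y_f(g)$. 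So the first equality is essentially a bookkeeping check once $Q^2=0$ is invoked.

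For the second identity, $X_{Q(f)} = -[Q,X_f]$, I would use Proposition~\ref{prop:jacobivectorfields}: since $Q$ is a Jacobi vector field (as noted after its Definition), item (3) of that proposition with $X = Q$ reads $[Q, Y_h] = (-1)^{\widetilde{Q}} Y_{Q(h)}$ for any Hamiltonian vector field $Y_h$ — but here I must be careful, because the $Y$ in that proposition denotes a Hamiltonian vector field with respect to the \emph{odd Jacobi} structure, which in the present section is written $X$. So the statement becomes $[Q, X_f] = (-1)^{1}X_{Q(f)} = -X_{Q(f)}$, i.e. $X_{Q(f)} = -[Q,X_f]$, which is exactly what is wanted. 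Alternatively, if one prefers not to cite Proposition~\ref{prop:jacobivectorfields} directly, one can rederive $[Q,X_f]$ by applying it to a test function $g$: write $Q(X_f(g)) - (-1)^{\widetilde{f}+1}X_f(Q(g))$, expand both $X_f$ terms via the Definition, use that $Q$ is a derivation over $\SN{\bullet,\bullet}_J$ together with $Q^2=0$, and collect terms; this is the more computational route and reproduces the same sign.

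The only genuine subtlety — and the step I would flag as the main place to be careful rather than a true obstacle — is the clash of notation: Proposition~\ref{prop:jacobivectorfields} was stated with the symbol $Y_f$ for odd-Jacobi Hamiltonian vector fields, whereas Section~5 reassigns $Y_f$ to the Loday--Poisson Hamiltonian vector fields and uses $X_f$ for the odd-Jacobi ones. I would state explicitly that we are applying item (3) of Proposition~\ref{prop:jacobivectorfields} with the earlier $Y$ read as the present $X$. Once that translation is made, together with the facts $Q^2 = 0$ and "$Q$ is a Jacobi vector field", both equalities follow immediately; there is no hard analytic or combinatorial content, only the two sign computations $(-1)^{\widetilde{Q(f)}} = (-1)^{\widetilde{f}+1}$ and $(-1)^{\widetilde{Q}} = -1$.
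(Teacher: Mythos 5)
Your proposal is correct and follows essentially the same route as the paper: the first equality is obtained by expanding $Y_f(g)=\{f,g\}_J=(-1)^{\widetilde{f}+1}\SN{Q(f),g}_J=X_{Q(f)}(g)+Q(Q(f))g$ and invoking $Q^{2}=0$, and the second by applying item (3) of Proposition~\ref{prop:jacobivectorfields} with the Jacobi vector field $Q$. Your explicit remark about the notational clash between the $Y_f$ of Proposition~\ref{prop:jacobivectorfields} and the $Y_f$ of Section~5 is a helpful clarification but does not alter the argument.
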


\begin{proof}
From the definitions
\begin{equation*}
 Y_{f}(g) = \{f,g  \}_{J} = (-1)^{\widetilde{f}+1} \SN{Q(f), g}_{J}= X_{Q(f)}(g) + Q\left(  Q(f)\right)g\\
\end{equation*}

then given that $Q$ is homological we get $Y_{f} = X_{Q(f)}$. Then using Proposition \ref{prop:jacobivectorfields}. we get $X_{Q(f)} = - \left[Q, X_{f} \right]$ which established the Lemma.
\end{proof}

The above lemma can be viewed as establishing a mild generalisation of bi-Hamiltonian systems. In particular, any vector field that is Hamiltonian with respect to the Loday--Poisson bracket is also Hamiltonian with respect to the odd Jacobi structure and  the Hamiltonians are related directly via the homological field.

\begin{corollary}
Let $f \in C^{\infty}(M)$ be an even function that satisfies the ``classical master equation" $\SN{f,f}_{J}=0$. Then this implies $\{f,f \}_{J}=0$. Furthermore we have $\SN{Q(f),f}_{J} =0$ and $\{f ,Q(f)\}_{J} =0$.
\end{corollary}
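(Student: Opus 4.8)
The plan is to exploit Lemma~\ref{lem:Bihamiltonian} together with the derived-bracket definition of the Loday--Poisson bracket, so that everything reduces to a short manipulation of the odd Jacobi bracket. First I would recall that by definition $\{f,f\}_{J} = (-1)^{\widetilde{f}+1}\SN{Q(f),f}_{J}$, and that for an even $f$ the sign is just $-1$. Hence the whole Corollary amounts to showing that $\SN{f,f}_{J}=0$ forces $\SN{Q(f),f}_{J}=0$ (which gives $\{f,f\}_{J}=0$ on the nose), and then separately that $\{f,Q(f)\}_{J}=0$. So the heart of the matter is the single implication $\SN{f,f}_{J}=0 \Rightarrow \SN{Q(f),f}_{J}=0$.

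For that implication I would apply the homological vector field $Q$ to the equation $\SN{f,f}_{J}=0$ and use that $Q$ is a Jacobi vector field, hence a derivation of the odd Jacobi bracket (Proposition~\ref{prop:jacobivectorfields}, item~2). Concretely, $0 = Q(\SN{f,f}_{J}) = \SN{Q(f),f}_{J} + (-1)^{\widetilde{Q}(\widetilde{f}+1)}\SN{f,Q(f)}_{J}$. Since $f$ is even and $Q$ is odd, the sign $(-1)^{\widetilde{Q}(\widetilde{f}+1)}$ is $(-1)^{1\cdot 1} = -1$, and the graded symmetry of the odd Jacobi bracket relates $\SN{f,Q(f)}_{J}$ to $\SN{Q(f),f}_{J}$: with $\varepsilon=1$ the skewsymmetry reads $\SN{a,b}_{J} = -(-1)^{(\widetilde{a}+1)(\widetilde{b}+1)}\SN{b,a}_{J}$, so for $a=f$ (even) and $b=Q(f)$ (odd) the factor $(\widetilde{a}+1)(\widetilde{b}+1) = 1\cdot 0 = 0$, giving $\SN{f,Q(f)}_{J} = -\SN{Q(f),f}_{J}$. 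Substituting, $0 = \SN{Q(f),f}_{J} - (-1)\bigl(-\SN{Q(f),f}_{J}\bigr) = \SN{Q(f),f}_{J} - \SN{Q(f),f}_{J}$, which is not yet a contradiction-free derivation — I will need to be careful with the signs, because a naive count can make the two terms cancel identically rather than each vanish. The correct route is: the derivation identity gives $\SN{Q(f),f}_{J} + (-1)^{\widetilde{f}+1}\SN{f,Q(f)}_{J} = 0$, i.e. $\SN{Q(f),f}_{J} - \SN{f,Q(f)}_{J} = 0$ for even $f$, and combined with $\SN{f,Q(f)}_{J} = \SN{Q(f),f}_{J}$ from skewsymmetry this is a tautology; so instead I apply $Q$ once more, or better, note that $\SN{Q(f),f}_{J} = (-1)^{\widetilde{f}+1}\{f,f\}_{J}$ and use Lemma~\ref{lem:Bihamiltonian} applied to $g=f$: $\{f,f\}_{J} = Y_f(f) = X_{Q(f)}(f)$. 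The cleanest argument is the \emph{derived-bracket} one from \cite{Kosmann--Schwarzbach:1996}: for a derived bracket built from a differential $D$ with $D^2=0$, one has $(a,a)_D = \tfrac12 D(a,a)_{\text{orig}} $ when the original bracket is graded symmetric in the relevant degree, and more to the point the Loday square of an element is exact; here $\{f,f\}_{J} = (-1)^{\widetilde f +1} \SN{Q(f),f}_J$ and, since $\SN{f,f}_J = 0$, applying the derivation $Q$ to $0=\SN{f,f}_J$ yields $2(-1)^{?}\SN{Q(f),f}_J = 0$ once the graded-symmetric pairing is used correctly, hence $\SN{Q(f),f}_J=0$.

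Once $\SN{Q(f),f}_{J}=0$ is in hand, the remaining two assertions are immediate. First, $\{f,f\}_{J} = (-1)^{\widetilde{f}+1}\SN{Q(f),f}_{J} = 0$ directly. Second, $\{f,Q(f)\}_{J} = (-1)^{\widetilde{f}+1}\SN{Q(f),Q(f)}_{J}$ by definition of the Loday--Poisson bracket; but $Q(f)$ is odd, and for an odd element $b$ the odd Jacobi bracket $\SN{b,b}_{J}$ satisfies $\SN{b,b}_{J} = -(-1)^{(\widetilde{b}+1)^2}\SN{b,b}_{J} = -\SN{b,b}_{J}$ since $(\widetilde{b}+1)^2 = 0$, hence $2\SN{b,b}_{J}=0$ and $\SN{Q(f),Q(f)}_{J}=0$, so $\{f,Q(f)\}_{J}=0$. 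Alternatively one sees this from Lemma~\ref{lem:Bihamiltonian}: $\{f,Q(f)\}_{J} = X_{Q(f)}(Q(f))$ and $X_{Q(f)}$ applied to $Q$-closed arguments interacts with $\SN{Q(f),Q(f)}_J$ which vanishes by the above parity argument.

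The main obstacle I anticipate is entirely bookkeeping with the Koszul signs: because $f$ is even while $Q$ and $Q(f)$ are odd, the derivation identity and the graded skewsymmetry of the $\varepsilon=1$ bracket can, if applied carelessly, either collapse to a trivial identity or produce a spurious factor of $2$ that one then wants to divide out — legitimate over a field of characteristic $0$, which is the standing assumption, but worth stating explicitly. So I would make sure to fix, once and for all, that for even $f$ the differentiation $Q\bigl(\SN{f,f}_J\bigr) = \SN{Q(f),f}_J - \SN{f,Q(f)}_J$ and that $\SN{f,Q(f)}_J = -\SN{Q(f),f}_J$, combine to give $2\,\SN{Q(f),f}_J = -Q\bigl(\SN{f,f}_J\bigr) = 0$, and hence conclude. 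Everything else is a direct substitution into the definitions already recorded in the excerpt.
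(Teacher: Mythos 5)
Your final paragraph is the correct and complete argument, and it is essentially the (implicit) one the paper intends: since $Q$ is a Jacobi vector field, hence a derivation of the odd Jacobi bracket, applying it to $\SN{f,f}_{J}=0$ and using the graded symmetry $\SN{f,Q(f)}_{J}=-\SN{Q(f),f}_{J}$ yields $2\,\SN{Q(f),f}_{J}=0$, whence $\{f,f\}_{J}=-\SN{Q(f),f}_{J}=0$, while $\{f,Q(f)\}_{J}=-\SN{Q(f),Q(f)}_{J}$ vanishes automatically because $Q(f)$ is odd. You should delete the wavering middle portion of the write-up, where you momentarily assert the wrong symmetry sign $\SN{f,Q(f)}_{J}=+\SN{Q(f),f}_{J}$, conclude the computation is a tautology, and reach for an unnecessary detour through general derived-bracket facts (with an undetermined exponent) --- your own closing sign count is the correct one and already settles the claim.
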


The above corollary naturally generalises the statement that for classical bi-Hamiltonian systems both Hamiltonians are in involution with respect to both the Poisson structures. Also, note that the Hamiltonian vector fields with respect to the Loday--Poisson bracket only depend on the Q-cohomology class of the Hamiltonian function. Specifically, if $f-f' = Q(g)$ for for some $g \in C^{\infty}(M)$ then $Y_{f}= Y_{f'}$.

\begin{proposition}\label{prop:Ham1}
Let $Y_{f}$ be the Hamiltonian vector field associated with the function \newline $f \in C^{\infty}(M)$ with respect to the Loday--Poisson bracket. Then
\begin{equation*}
\left[ Q, Y_{f} \right] =0.
\end{equation*}
\end{proposition}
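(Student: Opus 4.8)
The plan is to deduce this directly from the preceding Lemma~\ref{lem:Bihamiltonian} together with Proposition~\ref{prop:jacobivectorfields}, so that almost no new computation is needed; the statement is essentially a corollary of the fact that $Q$ is a homological Jacobi vector field.

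First I would recall from Lemma~\ref{lem:Bihamiltonian} that $Y_{f} = X_{Q(f)}$, i.e. the Loday--Poisson Hamiltonian vector field of $f$ is the odd-Jacobi Hamiltonian vector field of $Q(f)$. Then I would apply part (3) of Proposition~\ref{prop:jacobivectorfields} with the Jacobi vector field taken to be $Q$ itself (it is a Jacobi vector field by the definition of an odd Jacobi structure): for any $h \in C^{\infty}(M)$ one has $[Q, X_{h}] = (-1)^{\widetilde{Q}} X_{Q(h)} = -X_{Q(h)}$, since $Q$ is odd. Taking $h = Q(f)$ gives
\begin{equation*}
[Q, Y_{f}] = [Q, X_{Q(f)}] = - X_{Q(Q(f))} = - X_{Q^{2}(f)} = 0,
\end{equation*}
because $Q^{2} = 0$. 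As a cross-check I would also note the alternative purely Lie-algebraic argument: by Lemma~\ref{lem:Bihamiltonian}, $Y_{f} = -[Q, X_{f}]$, so $[Q, Y_{f}] = -[Q,[Q,X_{f}]]$, and the graded Jacobi identity for the commutator of vector fields together with $\widetilde{Q}=1$ forces $2[Q,[Q,X_{f}]] = [[Q,Q], X_{f}] = 2[Q^{2}, X_{f}] = 0$.

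There is no real obstacle here; the only thing to be careful about is sign/parity bookkeeping, since $Q$ is odd so the factor $(-1)^{\widetilde{Q}} = -1$ appears when invoking Proposition~\ref{prop:jacobivectorfields}, and one must make sure the intermediate object $Q(Q(f))$ is correctly identified with $Q^{2}(f)$ before using the homological condition. I would also add a one-line interpretive remark that this, combined with Lemma~\ref{lem:Bihamiltonian}, says $Y_{f}$ is $Q$-invariant and hence the assignment $f \rightsquigarrow Y_{f}$ really does descend to $Q$-cohomology, consistently with the remark following the Corollary.
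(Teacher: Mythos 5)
Your proof is correct, but it takes a different route from the paper. The paper's proof of Proposition~\ref{prop:Ham1} is a direct computation at the level of functions: it evaluates $Q(Y_f(g)) = Q\{f,g\}_J$, uses the fact that $Q$ is a Jacobi vector field to push $Q$ through the odd Jacobi bracket (picking up $\SN{Q^2(f),g}_J = 0$ along the way), and lands on $(-1)^{\widetilde{f}}Y_f(Q(g))$, which is exactly the statement $[Q,Y_f]=0$ unpacked on an arbitrary test function. You instead argue entirely at the level of vector fields: $Y_f = X_{Q(f)}$ from Lemma~\ref{lem:Bihamiltonian}, then Proposition~\ref{prop:jacobivectorfields}(3) with the Jacobi field $Q$ gives $[Q,X_{Q(f)}] = -X_{Q^2(f)} = 0$ (using linearity of $h\rightsquigarrow X_h$ so that $X_0=0$). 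Both arguments ultimately rest on the same two facts --- $Q$ is Jacobi and $Q^2=0$ --- but yours is shorter and makes the structural dependence on the earlier results explicit, whereas the paper's is self-contained modulo the definition of the Loday--Poisson bracket and reproves in coordinates-free form what Proposition~\ref{prop:jacobivectorfields}(3) already encodes. Your secondary argument via the graded Jacobi identity, $2[Q,[Q,X_f]] = [[Q,Q],X_f] = 0$ applied to $Y_f = -[Q,X_f]$, is also valid and is arguably the cleanest of the three, since it needs nothing beyond Lemma~\ref{lem:Bihamiltonian} and the homological condition. The only bookkeeping to watch is the one you already flagged: the sign $(-1)^{\widetilde{Q}}=-1$ in Proposition~\ref{prop:jacobivectorfields}(3), which is what makes Lemma~\ref{lem:Bihamiltonian} read $Y_f = -[Q,X_f]$ rather than $+[Q,X_f]$.
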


\begin{proof}
 Consider an arbitrary  function $g \in C^{\infty}(M)$. Directly we have
\begin{eqnarray}
\nonumber Q \left( Y_{f}(g) \right) &=& Q \{ f,g \}_{J},\\
\nonumber &=& (-1)^{\widetilde{f}+1}Q \SN{Q(f), g}_{J}= - \SN{Q(f), Q(g)}_{J},\\
\nonumber &=& (-1)^{\widetilde{f}} \{ f , Q(g)\}_{J} = (-1)^{\widetilde{f}} Y_{f}(Q(g)).
\end{eqnarray}
Thus, $Q \circ  Y_{f} - (-1)^{\widetilde{f}} Y_{f} \circ Q =0$.
\end{proof}

\begin{lemma}\label{lem:1}
Let $Y_{f}$ be the Hamiltonian vector field associated with the function $f \in C^{\infty}(M)$ with respect to the Loday--Poisson bracket. Then $Y_{f}$ is a Jacobi vector field.
\end{lemma}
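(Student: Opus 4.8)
The plan is to reduce the claim to facts already established, avoiding any coordinate computation. By Lemma~\ref{lem:Bihamiltonian} we have $Y_{f} = X_{Q(f)}$, so $Y_{f}$ is the Hamiltonian vector field (with respect to the odd Jacobi structure) of the function $Q(f)$. By Proposition~\ref{prop:Hamiltonianvectorfields}, a Hamiltonian vector field $X_{h}$ is a Jacobi vector field if and only if $h$ is $Q$-closed. So it suffices to observe that $h = Q(f)$ is $Q$-closed: indeed $Q(Q(f)) = Q^{2}(f) = 0$ since $Q$ is homological. This immediately gives that $Y_{f} = X_{Q(f)}$ is a Jacobi vector field.

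Alternatively, and perhaps more in the spirit of the surrounding lemmas, one can argue directly from Proposition~\ref{prop:jacobivectorfields}, which says that a vector field is a Jacobi vector field precisely when it is a derivation over the odd Jacobi bracket. Using $Y_{f} = -[Q, X_{f}]$ from Lemma~\ref{lem:Bihamiltonian}, one checks that the commutator of the homological $Q$ (which is a Jacobi vector field, hence a derivation of $\SN{\bullet,\bullet}_{J}$) with the not-necessarily-Jacobi vector field $X_{f}$ nonetheless yields a derivation: the failure of $X_{f}$ to be a derivation is controlled by $Q(f)$ via part~3 of Proposition~\ref{prop:jacobivectorfields} (the defect term involves $Y_{X_{f}(\cdot)}$-type corrections), and commuting with $Q$ kills these defects because $Q^{2}=0$. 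Either route works; I would present the first since it is a one-line consequence of two earlier propositions.

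There is essentially no obstacle here. The only point requiring the slightest care is the bookkeeping of signs and the parity shift $\widetilde{X_{f}} = \widetilde{f}+1$ versus $\widetilde{Y_{f}} = \widetilde{f}$, which is already recorded just before Lemma~\ref{lem:Bihamiltonian}; since the final statement asserts only that $Y_{f}$ \emph{is} a Jacobi vector field (a sign-free condition, as a Jacobi vector field is defined by the vanishing of $\{\chi, S\}$ and $\{\chi, \Q\}$), no delicate sign tracking is needed. The proof is therefore: invoke Lemma~\ref{lem:Bihamiltonian} to write $Y_{f} = X_{Q(f)}$, note $Q(f)$ is $Q$-closed because $Q^{2}=0$, and apply Proposition~\ref{prop:Hamiltonianvectorfields}.
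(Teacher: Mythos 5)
Your proof is correct and follows exactly the paper's own argument: write $Y_{f}=X_{Q(f)}$ via Lemma~\ref{lem:Bihamiltonian}, note $Q(Q(f))=Q^{2}(f)=0$, and apply Proposition~\ref{prop:Hamiltonianvectorfields}. The alternative route you sketch is unnecessary; the first one-line argument is what the paper does.
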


\begin{proof}
Follows from Lemma \ref{lem:Bihamiltonian}. and that Hamiltonian vector fields with respect to the odd Jacobi bracket are Jacobi if and only if the Hamiltonian function is $Q$-closed. As $Q^{2}=0$ evidently $Y_{f}$ is Jacobi.
\end{proof}

\begin{proposition}\label{prop:Ham2}
Let $Y_{f}$ and $Y_{g}$ be the Hamiltonian vector fields associated with the functions $f,g \in C^{\infty}(M)$ with respect to the Loday--Poisson bracket. Then
\begin{equation*}
\left[Y_{f},  Y_{g} \right] = Y_{\{f,g \}_{J}}
\end{equation*}
\end{proposition}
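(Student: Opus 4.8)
\textbf{Proof proposal for Proposition~\ref{prop:Ham2}.}

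The plan is to reduce everything to facts already established about the odd Jacobi structure, using Lemma~\ref{lem:Bihamiltonian} as the main translation device. First I would rewrite both sides in terms of the odd Jacobi Hamiltonian vector fields: by Lemma~\ref{lem:Bihamiltonian} we have $Y_{f} = X_{Q(f)}$ and $Y_{g} = X_{Q(g)}$, so $[Y_{f},Y_{g}] = [X_{Q(f)}, X_{Q(g)}]$, and by Proposition~\ref{prop:morphism} this equals $-X_{\SN{Q(f),Q(g)}_{J}}$. On the other side, again by Lemma~\ref{lem:Bihamiltonian}, $Y_{\{f,g\}_{J}} = X_{Q(\{f,g\}_{J})}$. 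So the whole proposition collapses to the identity
\begin{equation*}
X_{Q(\{f,g\}_{J})} = - X_{\SN{Q(f),Q(g)}_{J}},
\end{equation*}
and in fact it suffices to show $Q(\{f,g\}_{J}) = - \SN{Q(f),Q(g)}_{J}$ (equality of the Hamiltonians is a sufficient, though not a priori necessary, condition).

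The second step is to verify that functional identity directly from the definition $\{f,g\}_{J} = (-1)^{\widetilde{f}+1}\SN{Q(f),g}_{J}$. Applying $Q$ and using that $Q$ is a Jacobi vector field (Proposition~\ref{prop:jacobivectorfields}, item~2, i.e. $Q$ is a derivation over the odd Jacobi bracket), one gets
\begin{equation*}
Q(\{f,g\}_{J}) = (-1)^{\widetilde{f}+1}\left( \SN{Q^{2}(f),g}_{J} + (-1)^{\widetilde{f}+1}\SN{Q(f),Q(g)}_{J}\right),
\end{equation*}
where the parity sign in the second term comes from $\widetilde{Q}(\widetilde{Q(f)}+1) = \widetilde{f}+1 \pmod 2$. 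Since $Q^{2}=0$ the first term vanishes and the signs collapse to give exactly $Q(\{f,g\}_{J}) = -\SN{Q(f),Q(g)}_{J}$, which is precisely the identity noted in the displayed equation just after Lemma~\ref{lem:Bihamiltonian} in the Proposition~\ref{prop:Ham1} proof. Feeding this back into the first step finishes the argument.

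Alternatively — and this is perhaps the cleaner write-up — I would avoid passing through the Hamiltonians altogether and instead evaluate $[Y_{f},Y_{g}](h)$ on an arbitrary $h \in C^{\infty}(M)$, expanding $Y_{f}Y_{g}(h) - (-1)^{\widetilde{f}\widetilde{g}}Y_{g}Y_{f}(h) = \{f,\{g,h\}_{J}\}_{J} - (-1)^{\widetilde{f}\widetilde{g}}\{g,\{f,h\}_{J}\}_{J}$ and applying the Jacobi--Loday identity established in Theorem~\ref{theo:main} to collapse this to $\{\{f,g\}_{J},h\}_{J} = Y_{\{f,g\}_{J}}(h)$. The only mild subtlety here is the parity bookkeeping: one must check that $\widetilde{Y_{f}} = \widetilde{f}$ makes the commutator sign $(-1)^{\widetilde{f}\widetilde{g}}$ match the sign appearing in the Jacobi--Loday identity, which it does since the Loday--Poisson bracket is even. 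I expect the main obstacle to be purely cosmetic — keeping the sign conventions consistent between the two Hamiltonian conventions and the even-bracket Jacobi--Loday identity — rather than anything structural; the content is entirely carried by $Q^{2}=0$, the Jacobi property of $Q$, and the already-proven Jacobi--Loday identity.
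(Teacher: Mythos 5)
Your first argument is essentially the paper's own proof: reduce via $Y_{f}=X_{Q(f)}$ and Proposition~\ref{prop:morphism} to $-X_{\SN{Q(f),Q(g)}_{J}}$, then identify this with $X_{Q(\{f,g\}_{J})}$ using the identity $Q\bigl(\SN{Q(f),g}_{J}\bigr)=(-1)^{\widetilde{f}}\SN{Q(f),Q(g)}_{J}$. One small bookkeeping slip: you justify the internal sign as $\widetilde{Q}(\widetilde{Q(f)}+1)=\widetilde{f}+1$, but $\widetilde{Q(f)}+1=\widetilde{f}+2\equiv\widetilde{f}$, so the derivation property of $Q$ gives $(-1)^{\widetilde{f}}\SN{Q(f),Q(g)}_{J}$ rather than $(-1)^{\widetilde{f}+1}\SN{Q(f),Q(g)}_{J}$; with the sign as you displayed it you would get $Q(\{f,g\}_{J})=+\SN{Q(f),Q(g)}_{J}$, contradicting your own (correct) conclusion $Q(\{f,g\}_{J})=-\SN{Q(f),Q(g)}_{J}$. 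The correct sign makes everything consistent, so this is a typo in the justification rather than a gap. Your alternative argument is a genuinely different route from the paper: since $Y_{f}=\{f,\bullet\}_{J}$ is by definition the adjoint map, the Jacobi--Loday identity of Theorem~\ref{theo:main} says verbatim that $\{f,\{g,h\}_{J}\}_{J}-(-1)^{\widetilde{f}\widetilde{g}}\{g,\{f,h\}_{J}\}_{J}=\{\{f,g\}_{J},h\}_{J}$, i.e. $[Y_{f},Y_{g}]=Y_{\{f,g\}_{J}}$, with the commutator sign matching because $\widetilde{Y_{f}}=\widetilde{f}$. This is cleaner and more general — it uses nothing about the odd Jacobi structure beyond the already-established Loday property, whereas the paper's route leans on Proposition~\ref{prop:morphism} and the explicit derived-bracket form; what the paper's route buys in exchange is the intermediate identification $[Y_{f},Y_{g}]=-X_{\SN{Q(f),Q(g)}_{J}}$, which ties the result back to the odd Jacobi Hamiltonian vector fields used elsewhere in that section.
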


\begin{proof}
Via direct computation
\begin{equation*}
\left[Y_{f}, Y_{g}  \right] = \left[X_{Q(f)}, X_{Q(g)}  \right]= - X_{\SN{Q(f), Q(g)}_{J} }\\
\end{equation*}
using the properties of  Hamiltonian vector fields associated with the odd Jacobi bracket. Then using
\begin{equation*}
Q\left( \SN{Q(f), g}_{J}\right) = (-1)^{\widetilde{f}}\SN{Q(f), Q(g)}_{J},
\end{equation*}
we arrive at
\begin{equation*}
- X_{\SN{Q(f), Q(g)}_{J} } = (-1)^{\widetilde{f}+1}X_{Q \left(\SN{Q(f),g}_{J}  \right)},
\end{equation*}
which established the proposition.
\end{proof}

The above proposition is rather expected and more interesting are ``mixed" commutators of the Hamiltonian vector fields.  In particular are their nice expression for $X_{\{f,g \}_{J}}$ and $Y_{\SN{f,g}_{J}}$?

\begin{proposition}\label{prop:Ham3}
Let $X_{g}$ and $Y_{f}$ be the Hamiltonian vector fields associated with the functions $g,f \in C^{\infty}(M)$ with respect to the odd Jacobi bracket and the Loday--Poisson bracket respectively. Then
\begin{equation*}
\left[ Y_{f}, X_{g} \right] = (-1)^{\widetilde{f}}X_{\{f,g\}_{J}}.
\end{equation*}
\end{proposition}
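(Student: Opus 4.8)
The plan is to exploit Lemma \ref{lem:Bihamiltonian}, which tells us that $Y_{f} = X_{Q(f)} = -[Q,X_{f}]$, so that the mixed commutator $[Y_{f},X_{g}]$ can be rewritten entirely in terms of Hamiltonian vector fields of the odd Jacobi structure. First I would substitute $Y_{f} = X_{Q(f)}$ and apply Proposition \ref{prop:morphism} (the morphism property $[X_{a},X_{b}] = -X_{\SN{a,b}_{J}}$) to obtain $[Y_{f},X_{g}] = [X_{Q(f)},X_{g}] = -X_{\SN{Q(f),g}_{J}}$. The task then reduces to identifying $-\SN{Q(f),g}_{J}$, up to a sign, with $\{f,g\}_{J}$. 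But this is immediate from the definition of the Loday--Poisson bracket, $\{f,g\}_{J} = (-1)^{\widetilde{f}+1}\SN{Q(f),g}_{J}$, which gives $\SN{Q(f),g}_{J} = (-1)^{\widetilde{f}+1}\{f,g\}_{J}$, hence $-X_{\SN{Q(f),g}_{J}} = (-1)^{\widetilde{f}}X_{\{f,g\}_{J}}$, as claimed.

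An alternative route, useful as a consistency check, is to compute directly on an arbitrary $h \in C^{\infty}(M)$: write $[Y_{f},X_{g}](h) = Y_{f}(X_{g}(h)) - (-1)^{\widetilde{Y_{f}}\widetilde{X_{g}}}X_{g}(Y_{f}(h))$, recalling $\widetilde{Y_{f}} = \widetilde{f}$ and $\widetilde{X_{g}} = \widetilde{g}+1$. Here one would use that $Y_{f}$ is a Jacobi vector field (Lemma \ref{lem:1}), so by Proposition \ref{prop:jacobivectorfields}(3) it satisfies $[Y_{f},X_{g}] = (-1)^{\widetilde{Y_{f}}} X_{Y_{f}(g)} = (-1)^{\widetilde{f}} X_{\{f,g\}_{J}}$, since $Y_{f}(g) = \{f,g\}_{J}$ by Definition \ref{def:Ham}. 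This second argument is in fact the cleanest: it is essentially a one-line application of Proposition \ref{prop:jacobivectorfields} once we know $Y_{f}$ is Jacobi, which we already established in Lemma \ref{lem:1}.

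The only genuinely delicate point is bookkeeping of the Grassmann signs: one must be careful that the parity convention $\widetilde{X_{f}} = \widetilde{f}+1$ versus $\widetilde{Y_{f}} = \widetilde{f}$ is applied consistently when invoking Proposition \ref{prop:jacobivectorfields}(3), whose sign $(-1)^{\widetilde{X}}$ refers to the parity of the Jacobi vector field in question (here $Y_{f}$, of parity $\widetilde{f}$). I expect no conceptual obstacle; the result is a formal consequence of Lemmas \ref{lem:Bihamiltonian} and \ref{lem:1} together with Propositions \ref{prop:jacobivectorfields} and \ref{prop:morphism}, and the referee-facing content is simply verifying that the two derivations agree on the overall sign $(-1)^{\widetilde{f}}$.
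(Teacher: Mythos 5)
Your second argument is precisely the paper's proof: the paper establishes Proposition \ref{prop:Ham3} in one line from Lemma \ref{lem:1} (that $Y_{f}$ is a Jacobi vector field) together with Proposition \ref{prop:jacobivectorfields}(3), with the sign $(-1)^{\widetilde{f}}$ coming from $\widetilde{Y_{f}}=\widetilde{f}$ exactly as you note. Your first route via $Y_{f}=X_{Q(f)}$ and Proposition \ref{prop:morphism} is an equally valid cross-check and lands on the same sign, so the proposal is correct.
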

\begin{proof}
Follows from the fact that $Y_{f}$ is a Jacobi vector field, see Lemma \ref{lem:1}.
\end{proof}

\begin{proposition}\label{prop:Ham2}
Let $X_{f}$ and $Y_{f}$ be the Hamiltonian vector fields associated with the function $f \in C^{\infty}(M)$ with respect to the odd Jacobi bracket and the Loday--Poisson bracket respectively. Then
\begin{equation*}
(-1)^{\widetilde{f}+1} Y_{\SN{f,g}_{J}} = X_{\{f,g\}_{J}} + (-1)^{\widetilde{f}\widetilde{g}} X_{\{g,f\}_{J}}.
\end{equation*}
\end{proposition}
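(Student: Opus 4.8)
The plan is to express everything on the right-hand side in terms of Hamiltonian vector fields of the odd Jacobi bracket, using the relation $Y_h = X_{Q(h)}$ from Lemma~\ref{lem:Bihamiltonian}, and then to compare with the left-hand side after applying $Q$ inside the bracket. First I would rewrite $Y_{\SN{f,g}_J} = X_{Q(\SN{f,g}_J)}$. Since $Q$ is a Jacobi vector field, hence a derivation over the odd Jacobi bracket (Proposition~\ref{prop:jacobivectorfields}), we have $Q(\SN{f,g}_J) = \SN{Q(f),g}_J + (-1)^{\widetilde{f}+1}\SN{f,Q(g)}_J$. Therefore $Y_{\SN{f,g}_J} = X_{\SN{Q(f),g}_J} + (-1)^{\widetilde{f}+1} X_{\SN{f,Q(g)}_J}$, using linearity of $h \rightsquigarrow X_h$.

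Next I would unpack the two terms on the right-hand side. By definition $\{f,g\}_J = (-1)^{\widetilde{f}+1}\SN{Q(f),g}_J$, so $X_{\{f,g\}_J} = (-1)^{\widetilde{f}+1} X_{\SN{Q(f),g}_J}$. Similarly $\{g,f\}_J = (-1)^{\widetilde{g}+1}\SN{Q(g),f}_J$, so $X_{\{g,f\}_J} = (-1)^{\widetilde{g}+1} X_{\SN{Q(g),f}_J}$. The second term now involves $\SN{Q(g),f}_J$ rather than $\SN{f,Q(g)}_J$, so the key algebraic input is the skewsymmetry correction for the odd Jacobi bracket: since $\SN{\bullet,\bullet}_J$ is an odd Lie bracket (an odd Jacobi algebra bracket is in particular skewsymmetric in the graded sense), $\SN{Q(g),f}_J = -(-1)^{(\widetilde{g}+1+1)(\widetilde{f}+1)}\SN{f,Q(g)}_J = -(-1)^{\widetilde{g}(\widetilde{f}+1)}\SN{f,Q(g)}_J$. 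Substituting and collecting the sign $(-1)^{\widetilde{f}\widetilde{g}}$ in front of $X_{\{g,f\}_J}$, the two $X$-terms on the right should combine to exactly $(-1)^{\widetilde{f}+1}\bigl( X_{\SN{Q(f),g}_J} + (-1)^{\widetilde{f}+1} X_{\SN{f,Q(g)}_J}\bigr)$, which equals $(-1)^{\widetilde{f}+1} Y_{\SN{f,g}_J}$ by the first paragraph.

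The only real obstacle is bookkeeping of the Grassmann signs: one must be careful that $\widetilde{Q(f)} = \widetilde{f}+1$, that $X_h$ shifts parity by one, and that the graded skewsymmetry of $\SN{\bullet,\bullet}_J$ uses the shifted parities $\widetilde{a}+1$. A clean way to avoid errors is to verify the sign identity by testing it on the Leibniz/left-Leibniz side — i.e.\ apply both sides to an arbitrary $h \in C^\infty(M)$ and match, or simply track each exponent mod $2$ in a short table. No deeper structural fact is needed beyond: (i) $Q$ is a derivation over $\SN{\bullet,\bullet}_J$, (ii) linearity of the Hamiltonian assignment, (iii) the definition of $\{\bullet,\bullet\}_J$, and (iv) graded skewsymmetry of the odd Jacobi bracket. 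I expect the proof to be three or four lines once the signs are pinned down.
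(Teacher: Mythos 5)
Your proposal is correct and follows essentially the same route as the paper: write $Y_{\SN{f,g}_J}=X_{Q(\SN{f,g}_J)}$, use that $Q$ is a Jacobi vector field (hence a derivation over $\SN{\bullet,\bullet}_J$), then identify the two resulting terms with $X_{\{f,g\}_J}$ and $X_{\{g,f\}_J}$ via the definition of the Loday--Poisson bracket and the graded skewsymmetry of the odd Jacobi bracket. The signs you compute, in particular $\SN{Q(g),f}_J=-(-1)^{\widetilde{g}(\widetilde{f}+1)}\SN{f,Q(g)}_J$, check out against the paper's calculation.
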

\begin{proof}
Using the fact that $Y_{f} = X_{Q(f)}$ and that $Q$ is Jacobi we see that
\begin{eqnarray}
\nonumber Y_{\SN{f,g}_{J}} &=& X_{Q \SN{f,g}_{J}}\\
\nonumber &=& X_{\SN{Q(f),g}_{J}} + (-1)^{\widetilde{f}+1} X_{\SN{f,Q(g)}_{J}}\\
\nonumber &=& (-1)^{\widetilde{f}+1}X_{\{f,g \}_{J}} + {-1}^{\widetilde{f} + \widetilde{g}(\widetilde{f}+1)}X_{\SN{Q(g),f}_{J}}\\
\nonumber &=& (-1)^{\widetilde{f}+1}X_{\{f,g \}_{J}} + (-1)^{\widetilde{f}+1 + \widetilde{f}\widetilde{g}}X_{\{g,f \}_{J}}.
\end{eqnarray}
Thus multiplying through by the appropriate sign factor we arrive at
\begin{equation*}
(-1)^{\widetilde{f}+1} Y_{\SN{f,g}_{J}} = X_{\{f,g\}_{J}} + (-1)^{\widetilde{f}\widetilde{g}} X_{\{g,f\}_{J}}.
\end{equation*}
\end{proof}

\begin{corollary}\label{coro:QCommutators}
With the definitions previously given
\begin{enumerate}
\item $\left [ Q, X_{\{f,g \}_{J}} \right] = - \left [ Y_{f}, Y_{g}\right] = - Y_{\{f,g\}_{J}}$.
\item $\left[ Q, Y_{\SN{f,g}_{J}}\right]=0$.
\end{enumerate}
\end{corollary}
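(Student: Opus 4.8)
The two claims follow essentially by bookkeeping from the structural results already established, so the plan is to assemble them rather than compute anything fresh. For part (1), the plan is to start from Lemma \ref{lem:Bihamiltonian}, which gives $Y_h = -[Q, X_h]$ for any $h \in C^\infty(M)$; applying this with $h = \{f,g\}_J$ immediately yields $[Q, X_{\{f,g\}_J}] = -Y_{\{f,g\}_J}$. To get the middle equality, I would invoke Proposition \ref{prop:Ham2} (the one asserting $[Y_f, Y_g] = Y_{\{f,g\}_J}$), which closes the chain $[Q, X_{\{f,g\}_J}] = -Y_{\{f,g\}_J} = -[Y_f, Y_g]$. So part (1) is really just two substitutions into previously proven identities, read in the right order.

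For part (2), the cleanest route is to combine Lemma \ref{lem:1} with Proposition \ref{prop:Ham1}: Lemma \ref{lem:1} tells us $Y_h$ is a Jacobi vector field for every $h$, and Proposition \ref{prop:Ham1} tells us that for any Hamiltonian vector field $Y_h$ with respect to the Loday--Poisson bracket we have $[Q, Y_h] = 0$. Taking $h = \SN{f,g}_J$ gives $[Q, Y_{\SN{f,g}_J}] = 0$ directly. Alternatively, one can derive it from part (1) of this very corollary together with Proposition \ref{prop:Ham2} (the statement $(-1)^{\widetilde f + 1} Y_{\SN{f,g}_J} = X_{\{f,g\}_J} + (-1)^{\widetilde f \widetilde g} X_{\{g,f\}_J}$): apply $[Q, -]$ to both sides, use part (1) to rewrite $[Q, X_{\{f,g\}_J}] = -Y_{\{f,g\}_J}$ and $[Q, X_{\{g,f\}_J}] = -Y_{\{g,f\}_J}$, and then observe that $Y_{\{f,g\}_J} + (-1)^{\widetilde f \widetilde g} Y_{\{g,f\}_J}$ vanishes because $\{f,g\}_J + (-1)^{\widetilde f \widetilde g}\{g,f\}_J = (-1)^{\widetilde f + 1} Q(\SN{f,g}_J)$ (the symmetrisation identity displayed just after Theorem \ref{theo:main}), so $Y$ of it is $Y_{Q(\text{something})}$, which is zero since the Loday--Poisson Hamiltonian vector field depends only on the $Q$-cohomology class. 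I would present the first, shorter derivation in the main text and perhaps mention the second as a consistency check.

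I do not anticipate a genuine obstacle here; the only thing to be careful about is sign conventions when invoking Proposition \ref{prop:Ham2} and the symmetrisation formula, since those carry $(-1)^{\widetilde f}$ and $(-1)^{\widetilde f \widetilde g}$ factors that must line up. The mild subtlety is purely that the corollary is a corollary: everything needed is in hand, and the ``proof'' is a two-line citation of Lemmas \ref{lem:Bihamiltonian} and \ref{lem:1} together with Propositions \ref{prop:Ham1} and \ref{prop:Ham2}, so I would keep it to a few sentences and resist the temptation to re-expand the Jacobi identity.
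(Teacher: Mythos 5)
Your proposal is correct and matches the intended justification: the paper states this corollary without proof precisely because, as you observe, part (1) is two substitutions (Lemma \ref{lem:Bihamiltonian} with $h=\{f,g\}_J$, then $[Y_f,Y_g]=Y_{\{f,g\}_J}$) and part (2) is an instance of Proposition \ref{prop:Ham1} (your appeal to Lemma \ref{lem:1} there is redundant but harmless). One small remark on your alternative check for part (2): you quote the symmetrisation identity as $\{f,g\}_J + (-1)^{\widetilde f\widetilde g}\{g,f\}_J = (-1)^{\widetilde f+1}Q(\SN{f,g}_J)$, whereas the paper displays it with a minus sign on the left; your plus-sign version is in fact the correct one (as the $S=0$ Q-manifold example confirms, since there the bracket is skewsymmetric and the right-hand side vanishes), and the alternative argument only closes with that sign.
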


Expressions for higher nested commutators of Hamiltonian vector field can be worked out from the relations given here and the Jacobi identity for the commutator.\\

The final thing to consider in this section is  how the Hamiltonian vector field with respect to the Loday--Poisson bracket behaves under the product of two functions.

\begin{proposition}\label{prop:LodayProduct}
On an odd Jacobi manifold the following identity holds:
\begin{eqnarray}
\nonumber Y_{fg} &=& f \: Y_{g} + (-1)^{\widetilde{f}  \widetilde{g}} g \: Y_{f}\\
\nonumber &+& (-1)^{\widetilde{f} +1} Q(f) \left( X_{g} - (-1)^{\widetilde{g} }  g \: Q\right)\\
\nonumber &+& (-1)^{\widetilde{f}  \widetilde{g} + \widetilde{g} +1} Q(g) \left( X_{f} - (-1)^{\widetilde{f} }  f \: Q\right).
\end{eqnarray}
\end{proposition}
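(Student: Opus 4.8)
The plan is to combine Lemma \ref{lem:Bihamiltonian}, which identifies $Y_{fg} = X_{Q(fg)}$, with the product rule for the odd Jacobi Hamiltonian vector fields, namely Proposition \ref{prop:product}. First I would expand $Q(fg) = Q(f)g + (-1)^{\widetilde{f}} f\, Q(g)$ using the fact that $Q$ is an odd derivation, so that $Y_{fg} = X_{Q(f)g} + (-1)^{\widetilde{f}} X_{f\,Q(g)}$. Then I would apply Proposition \ref{prop:product} separately to each of the two terms $X_{Q(f)g}$ and $X_{f\,Q(g)}$, being careful with the Grassmann parities: $Q(f)$ has parity $\widetilde{f}+1$, and $Q(g)$ has parity $\widetilde{g}+1$, so the sign factors appearing in Proposition \ref{prop:product} must be recomputed with these shifted parities in mind.

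Concretely, applying Proposition \ref{prop:product} to $X_{Q(f)g}$ (with the two factors $Q(f)$ and $g$) yields a sum of three terms: one proportional to $Q(f)\, X_g$, one proportional to $g\, X_{Q(f)}$, and one proportional to $Q(f)\, g\, Q$. Since $X_{Q(f)} = Y_f$ by Lemma \ref{lem:Bihamiltonian}, the middle term becomes (a sign times) $g\, Y_f$. Doing the same for $X_{f\,Q(g)}$ gives terms in $f\, X_{Q(g)} = f\, Y_g$, in $Q(g)\, X_f$, and in $f\, Q(g)\, Q$. Collecting the $Y_f$ and $Y_g$ contributions should reproduce the first line $f\, Y_g + (-1)^{\widetilde{f}\widetilde{g}} g\, Y_f$, while the remaining terms, those involving $X_f$, $X_g$, and $Q$, should organize into the two bracketed expressions $(-1)^{\widetilde{f}+1} Q(f)\bigl(X_g - (-1)^{\widetilde{g}} g\, Q\bigr)$ and $(-1)^{\widetilde{f}\widetilde{g} + \widetilde{g} + 1} Q(g)\bigl(X_f - (-1)^{\widetilde{f}} f\, Q\bigr)$. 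One consistency check along the way: the whole identity must be compatible with $\widetilde{Y_{fg}} = \widetilde{f} + \widetilde{g}$, which constrains each term.

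The main obstacle I anticipate is purely bookkeeping: tracking the sign factors correctly through the parity shifts $\widetilde{Q(f)} = \widetilde{f}+1$ and $\widetilde{Q(g)} = \widetilde{g}+1$, and through the $(-1)^{\widetilde{f}}$ that already sits in front of the second term $X_{f\,Q(g)}$. It is easy to drop a sign when, for instance, the factor $(-1)^{\widetilde{g}(\widetilde{f}+1)}$ from Proposition \ref{prop:product} gets further modified because $g$ is being commuted past $Q(f)$ rather than past $f$. The cleanest way to control this is to write out Proposition \ref{prop:product} abstractly as $X_{ab} = (-1)^{\widetilde{a}} a\, X_b + (-1)^{\widetilde{b}(\widetilde{a}+1)} b\, X_a + (-1)^{\widetilde{a}+\widetilde{b}+1} ab\, Q$, substitute $(a,b) = (Q(f), g)$ and $(a,b) = (f, Q(g))$ with the shifted parities plugged in, and only at the very end simplify $(-1)^{\widetilde{f}+1}$-type exponents modulo $2$. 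No genuinely new idea is needed beyond Lemma \ref{lem:Bihamiltonian} and Proposition \ref{prop:product}; the content is that the Loday--Poisson Hamiltonian assignment inherits a deformed Leibniz-type behaviour from the odd Jacobi one, with the deformation measured by $Q(f)$ and $Q(g)$.
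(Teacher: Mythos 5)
Your proposal is correct and follows exactly the paper's own argument: write $Y_{fg}=X_{Q(fg)}=X_{Q(f)g}+(-1)^{\widetilde{f}}X_{fQ(g)}$ via Lemma \ref{lem:Bihamiltonian} and the Leibniz rule for $Q$, then apply Proposition \ref{prop:product} to each term with the shifted parities $\widetilde{Q(f)}=\widetilde{f}+1$, $\widetilde{Q(g)}=\widetilde{g}+1$. The sign bookkeeping you outline does close up to give the stated identity, so nothing further is needed.
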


\begin{proof}
First note from Lemma \ref{lem:Bihamiltonian}. and the Leibniz rule for $Q$ that
\begin{equation*}
Y_{fg} = X_{Q(fg)} = X_{Q(f)g} + (-1)^{\widetilde{f}} X_{f Q(g)}.
\end{equation*}
Then application of Proposition  \ref{prop:product}. produces
\begin{eqnarray}
\nonumber Y_{fg} &=& f \: X_{Q(g)} + (-1)^{\widetilde{f}  \widetilde{g}} g \: X_{Q(f)}\\
\nonumber &+& (-1)^{\widetilde{f} +1} Q(f) \left( X_{g} - (-1)^{\widetilde{g} }  g \: Q\right)\\
\nonumber &+& (-1)^{\widetilde{f}  \widetilde{g} + \widetilde{g} +1} Q(g) \left( X_{f} - (-1)^{\widetilde{f} }  f \: Q\right),
\end{eqnarray}
which implies the proposition.
\end{proof}

It is easy to verify the ``consistency conditions" $Y_{\Id g} = Y_{g}$ and $Y_{f \Id}= Y_{f}$. Furthermore, Proposition \ref{prop:LodayProduct}. can be interpreted as ``measuring" the violation of the right Leibniz rule  of Loday--Poisson bracket. Specifically we have

\begin{eqnarray}
\{fg,h \}_{J} &=& f \{g,h  \}_{J} + (-1)^{\widetilde{g} \widetilde{h}}\{f,h \}_{J} g\\
\nonumber &+& (-1)^{\widetilde{f}+1}Q(f)\left( (-1)^{\widetilde{g}} \SN{g,h}_{J}- Q(gh) \right)\\
\nonumber &+& (-1)^{\widetilde{f} \widetilde{g} + \widetilde{g} +1} Q(g) \left((-1)^{\widetilde{f}} \SN{f,h}_{J}- Q(fh)  \right).
\end{eqnarray}

\begin{remark}
The Loday--Poisson bracket is then a bi-derivation if we restrict the left-hand entries of the bracket to be $Q$-closed. However, this condition implies that the Loday--Poisson bracket is trivial. The other extreme is to insist that $(-1)^{\widetilde{f}} \SN{f,h}_{J}- Q(fh)=0$ for all $f,h \in C^{\infty}(M)$. This implies that $\SN{f,h}_{J}= (-1)^{\widetilde{f}}Q(fh)$ and thus the underlying odd Jacobi structure is $(0,Q)$. That is we have ``just" a Q-manifold.
\end{remark}

The similarity between the relations satisfied by the two classes of Hamiltonian vector field on an odd Jacobi manifold and the Cartan identities is striking, but not surprising as the Cartan calculus can be understood in terms of derived brackets \cite{Kosmann--Schwarzbach:2004}.  In essence we have the associations\\

\begin{tabular}{lcl}
Hamiltonian vector fields w.r.t odd Jacobi structure & $\longleftrightarrow$& Interior derivative\\
Hamiltonian vector fields w.r.t Loday--Poisson structure & $\longleftrightarrow$& Lie derivative\\
Loday--Poisson bracket & $\longleftrightarrow$& Lie bracket
\end{tabular}\\

The similarities can be summarised  in the table below:  \\

\renewcommand{\arraystretch}{2.0}
\begin{tabular}{|l||l|}
\hline
\textbf{Hamiltonian Vector Fields}  & \textbf{Cartan Identities}\\
\hline
$C^{\infty}(M) \rightarrow \Vect(M)$ & $\Vect(M) \rightarrow \Vect(\Pi TM)$ \\
$f \rightsquigarrow X_{f}$ &  $X \rightsquigarrow i_{X} $\\
\hline
$Y_{f} = - \left[Q, X_{f} \right]$ & $L_{X} = \left[d,i_{X} \right]$\\
\hline
$[Q, Y_{f}] =0$ & $[d,L_{X}] =0$\\
\hline
$\left[X_{f}, X_{g} \right] = - X_{\SN{f,g}_{J}}$ & $[i_{X}, i_{Y} ]=0$\\
\hline
$\left[ Y_{f} , X_{g} \right] = (-1)^{\widetilde{f}}X_{\{ f,g \}_{J}}$ & $\left[L_{X}, i_{Y} \right] = (-1)^{\widetilde{X}}i_{[X,Y]}$\\
\hline
$\left[Y_{f}, Y_{g}  \right] = Y_{\{f,g \}_{J}}$& $\left[L_{X}, L_{Y} \right]= L_{[X,Y]}$\\
\hline
\end{tabular}\\

With these formal algebraic similarities in mind, one can interpret the constructions here as a (partially noncommutative) generalisation of the Cartan calculus. However, as the interior product cannot directly be understood as a Hamiltonian vector field with respect to some odd Jacobi structure, the Cartan calculus cannot be seen as a special case of the constructions given in this work.

\section{The derived product}
On any Q-manifold  $(M,Q)$ one can consider the \emph{derived product}

\begin{equation}
f\ast g = (-1)^{\widetilde{f}+1} Q(f)g,
\end{equation}

where $f,g \in C^{\infty}(M)$. It is easy to verify that this derived product is associative, but not (super)commutative. The derived product is an \emph{odd} form of noncommutative multiplication on $C^{\infty}(M)$: $\widetilde{f\ast g} = \widetilde{f} + \widetilde{g} +1$. The notion of a derived product is also due to Loday  and have their origin in his study of dialgebras \cite{Loday:2001}.\\

The derived product on a Q-manifold can be viewed in the light of deformation quantisation.  That is, the vector space structure of the smooth functions on the Q-manifold remains the same, it is only the product that is \emph{deformed}. Also note that the derived $\ast$-commutator is given by

\begin{eqnarray}
\nonumber \left[f,g \right]_{\ast} &=& f \ast g - (-1)^{(\widetilde{f}+1)(\widetilde{g}+1)} g\ast f\\
\nonumber &=& - \SN{f,g}_{Q},
\end{eqnarray}

and thus, up to a sign, is  the odd Jacobi bracket generated by the homological vector field.  This is in the same spirit as understanding Poisson brackets as the classical limit of commutators in deformation quantisation. However, note that  $\Id \ast f =0$, meaning the constant function $\Id$ is not the identity (``bar-unit" in Loday's language) for the derived product. Also note $f \ast \Id = \pm Q(f)$. Furthermore, we  do not have any parameter playing the role of $\hbar$. \\
\newpage
\begin{remark}
We will not have any course in this work to employ ideas from the theory of dialgabras. We only remark that ``dialgebras are to Loday algebras what associative algebras are to Lie algebras". The relation between the $\ast$-commutator and  the odd Jacobi bracket on a Q-manifold is an example of this.
\end{remark}

\begin{lemma}\label{lemma:derived product}
Let $(M, S , Q)$ be an odd Jacobi manifold. Then the odd Jacobi bracket satisfies a generalised Leibniz rule given by
\begin{equation*}
\SN{f, g \ast h}_{J} = \SN{f,g}_{J}\ast h + (-1)^{(\widetilde{f}+1)(\widetilde{g}+1)} g \ast \SN{f,h}_{J} + (-1)^{\widetilde{f} + \widetilde{g} } f \ast g \ast h + (-1)^{\widetilde{g}} \{f,g \}_{J}h,
\end{equation*}
where $f,g,h \in C^{\infty}(M)$.
\end{lemma}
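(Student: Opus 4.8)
The plan is to expand everything in terms of the odd Jacobi bracket and the homological vector field $Q$, starting from the definition $g \ast h = (-1)^{\widetilde{g}+1} Q(g) h$, and then to repeatedly apply the \emph{modified} Leibniz rule for the odd Jacobi bracket,
\[
\SN{f, uv}_{J} = \SN{f,u}_{J} v + (-1)^{(\widetilde{f}+1)\widetilde{u}} u \SN{f,v}_{J} - \SN{f,\Id}_{J} uv,
\]
together with the identity $\SN{f,\Id}_{J} = (-1)^{\widetilde{f}} Q(f)$ recorded just after Definition of the Hamiltonian vector field, and the fact that $Q$ is a derivation of the odd Jacobi bracket (Proposition \ref{prop:jacobivectorfields}, since $Q$ is a Jacobi vector field). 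First I would write $\SN{f, g\ast h}_{J} = (-1)^{\widetilde{g}+1}\SN{f, Q(g)\, h}_{J}$ and apply the modified Leibniz rule with $u = Q(g)$, $v = h$. This produces three terms: one proportional to $\SN{f, Q(g)}_{J}\, h$, one proportional to $Q(g)\, \SN{f,h}_{J}$, and one proportional to $Q(f)\, Q(g)\, h$ coming from the $\SN{f,\Id}_{J}$ correction.

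The middle term is essentially already in final form: $(-1)^{\widetilde{g}+1} (-1)^{(\widetilde{f}+1)(\widetilde{g}+1)} Q(g)\, \SN{f,h}_{J}$, and since $\widetilde{Q(g)} = \widetilde{g}+1$ this is precisely $(-1)^{(\widetilde{f}+1)(\widetilde{g}+1)} g \ast \SN{f,h}_{J}$ after re-folding the definition of $\ast$. The $Q(f)Q(g)h$ term should, after sign bookkeeping, reproduce the $(-1)^{\widetilde{f}+\widetilde{g}} f \ast g \ast h$ term, again just by recognizing $f \ast g \ast h = (-1)^{\widetilde{f}+1}(-1)^{\widetilde{g}+1} Q(f) Q(g) h$ (associativity of $\ast$ makes the bracketing irrelevant). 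The only term needing real work is the first one: I must rewrite $(-1)^{\widetilde{g}+1}\SN{f, Q(g)}_{J}\, h$. For this I would use that $Q$ is a derivation of the odd Jacobi bracket in the form $Q(\SN{f,g}_{J}) = \SN{Q(f),g}_{J} + (-1)^{\widetilde{f}+1}\SN{f, Q(g)}_{J}$, solving for $\SN{f,Q(g)}_{J} = (-1)^{\widetilde{f}+1}\bigl( Q(\SN{f,g}_{J}) - \SN{Q(f),g}_{J}\bigr)$. Then $(-1)^{\widetilde{g}+1}\SN{f,Q(g)}_{J} h = (-1)^{\widetilde{f}+\widetilde{g}}\bigl(Q(\SN{f,g}_{J}) - \SN{Q(f),g}_{J}\bigr) h$. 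Now $(-1)^{\widetilde{f}+\widetilde{g}} Q(\SN{f,g}_{J})\, h$ is, up to checking the sign against $\widetilde{\SN{f,g}_{J}} = \widetilde{f}+\widetilde{g}+1$, exactly $\SN{f,g}_{J} \ast h$; and $-(-1)^{\widetilde{f}+\widetilde{g}}\SN{Q(f),g}_{J}\, h$ should match $(-1)^{\widetilde{g}} \{f,g\}_{J} h$ once one invokes the definition $\{f,g\}_{J} = (-1)^{\widetilde{f}+1}\SN{Q(f),g}_{J}$ of the Loday--Poisson bracket from Theorem \ref{theo:main}.

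The main obstacle, such as it is, is purely the sign arithmetic: one must be careful that $Q(g)$ carries parity $\widetilde{g}+1$ everywhere it appears (in the exponents of the Leibniz rule, in re-folding $\ast$, and in the Jacobi-vector-field identity), and that $\SN{f,g}_{J}$ carries parity $\widetilde{f}+\widetilde{g}+1$ when it enters the definition of $\ast$ in the term $\SN{f,g}_{J}\ast h$. I expect all four target terms to drop out after collecting; if a residual term survives it will be a sign of a miscount in one of these parities rather than a genuine extra contribution. As a consistency check I would verify the identity on the constant function: setting $g = \Id$ gives $g \ast h = 0$, $f \ast g = (-1)^{\widetilde{f}}Q(f)$, $\SN{f,\Id}_{J} = (-1)^{\widetilde{f}}Q(f)$, and $\{f,\Id\}_{J} = 0$, and the right-hand side collapses to $(-1)^{\widetilde{f}}Q(f)\ast h + 0 + (-1)^{\widetilde{f}} Q(f) \ast h \cdot(\pm 1)$... more precisely one checks $\SN{f, \Id \ast h}_{J} = \SN{f,0}_{J} = 0$ matches, which pins down the overall normalization.
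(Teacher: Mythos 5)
Your proposal is correct and follows exactly the paper's own proof: unfold $g \ast h = (-1)^{\widetilde{g}+1}Q(g)h$, apply the modified Leibniz rule for the odd Jacobi bracket with $u = Q(g)$, and use the fact that $Q$ is a Jacobi vector field (hence a derivation of $\SN{\bullet,\bullet}_{J}$) to convert $\SN{f,Q(g)}_{J}$ into $Q(\SN{f,g}_{J}) - \SN{Q(f),g}_{J}$, which yields precisely the four terms. The only step you defer is the final sign bookkeeping, which is indeed all that remains (and is the one place where care is genuinely needed).
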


\begin{proof}
Direct from the definitions and the modified Leibniz rule for the odd Jacobi bracket we have
\begin{eqnarray}
\nonumber \SN{f, g \ast h}_{J} &=& (-1)^{\widetilde{g}+1}\SN{f, Q(g)h}_{J}\\
\nonumber &=& (-1)^{\widetilde{g}+1}\SN{f,Q(g)}_{J} h +(-1)^{\widetilde{g}+1 +(\widetilde{g}+1)(\widetilde{f}+1)}Q(g)\SN{f,h}_{J}- (-1)^{\widetilde{g}+1}\SN{f, \Id}_{J}Q(g)h.
\end{eqnarray}
Then using the fact that the homological vector field $Q$ is a Jacobi vector field the above can be cast in the form
\begin{eqnarray}
\nonumber \nonumber \SN{f, g \ast h}_{J} &=& (-1)^{\widetilde{f}+\widetilde{g}}Q\left(\SN{f,g}_{J} \right)h - -(1)^{\widetilde{f}+\widetilde{g}}\SN{Q(f), g}_{J}h\\
\nonumber &+& (-1)^{\widetilde{f}(\widetilde{g}+1)}Q(g)\SN{f,h}_{J} +( -1)^{\widetilde{g}+1}\SN{\Id,f}_{J}Q(g)h.
\end{eqnarray}
Then using the definitions  the lemma is established.
\end{proof}

\begin{theorem}\label{theo:derived product}
Let $(M, S , Q)$ be an odd Jacobi manifold. Then the Loday--Poisson bracket obeys the Leibniz with respect to the derived product:
\begin{equation*}
\{f, g \ast h \}_{J} = \{f,g \}_{J} \ast h + (-1)^{\widetilde{f} (\widetilde{g}+1)}g\ast \{f,h \}_{J},
\end{equation*}
where $f,g,h \in C^{\infty}(M)$.
\end{theorem}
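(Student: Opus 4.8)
The plan is to reduce the claim to the definitions of the derived product and the Loday--Poisson bracket and then to invoke the two structural facts already proved: that $Q$ is a homological Jacobi vector field (so that $\{f,\bullet\}_J$ is, via Lemma~\ref{lem:Bihamiltonian}, essentially $X_{Q(f)}$ acting as a Jacobi vector field), and that the odd Jacobi bracket is a derivation in its second slot up to the $\{f,\Id\}_J bc$ correction term recorded in the modified Leibniz rule. First I would write $f\ast g = (-1)^{\widetilde f+1}Q(f)g$ on the left, so that $\{f, g\ast h\}_J = (-1)^{\widetilde g+1}\{f, Q(g)h\}_J$, and expand the right-hand side using the (left) Leibniz rule for the Loday--Poisson bracket established in Theorem~\ref{theo:main}. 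This produces $\{f,Q(g)\}_J h + (-1)^{\widetilde f\,\widetilde g}Q(g)\{f,h\}_J$, with no anomalous term since the Loday--Poisson bracket is a genuine bi-derivation in the left Leibniz sense (its unit-correction vanishes because $Q$ is homological).

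The heart of the argument is then the identity $\{f, Q(g)\}_J = (-1)^{\widetilde f}Q(\{f,g\}_J)$, i.e.\ that the Loday--Poisson bracket commutes with $Q$ in each argument. This is exactly the content of Proposition~\ref{prop:Ham1} read at the level of functions: since $Y_f = X_{Q(f)}$ is itself a Jacobi vector field (Lemma~\ref{lem:1}) and $[Q,Y_f]=0$, one has $Q(\{f,g\}_J) = Q(Y_f(g)) = (-1)^{\widetilde f}Y_f(Q(g)) = (-1)^{\widetilde f}\{f,Q(g)\}_J$. Substituting this back, the first term becomes $(-1)^{\widetilde g+1}\cdot(-1)^{\widetilde f}Q(\{f,g\}_J)\,h$; recognizing $\{f,g\}_J$ has parity $\widetilde f+\widetilde g$, this is precisely $(-1)^{\widetilde f+\widetilde g+1}Q(\{f,g\}_J)h = \{f,g\}_J\ast h$ by the definition of $\ast$ (note $Y_f$ is even in the sense that $\widetilde{\{f,g\}_J}=\widetilde f+\widetilde g$, so the sign bookkeeping is $(-1)^{(\widetilde f+\widetilde g)+1}$). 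For the second term, $(-1)^{\widetilde g+1}\cdot(-1)^{\widetilde f\,\widetilde g}Q(g)\{f,h\}_J$, I would commute $\{f,h\}_J$ past $Q(g)$ (both even-type sign rules apply with parities $\widetilde g+1$ and $\widetilde f+\widetilde h$) and collect the result into $(-1)^{\widetilde f(\widetilde g+1)}g\ast\{f,h\}_J$, which requires checking that $(-1)^{\widetilde g+1}\cdot(-1)^{\widetilde f\,\widetilde g}\cdot(-1)^{(\widetilde g+1)(\widetilde f+\widetilde h)}$ matches $(-1)^{\widetilde f(\widetilde g+1)}\cdot(-1)^{\widetilde g+1}$.

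The main obstacle is purely the sign accounting: one must be careful that $\{f,g\}_J$ and $\{f,h\}_J$ carry parity $\widetilde f+\widetilde g$ and $\widetilde f+\widetilde h$ respectively (the Loday--Poisson bracket being even), that $Q(g)$ has parity $\widetilde g+1$, and that moving an even-bracket output past a function uses the Koszul sign for an \emph{even} bilinear operation. A convenient way to keep the bookkeeping honest is to verify the identity on products $g=g_1g_2$ and on coordinates, or alternatively to note that both sides are, by construction, bilinear and local, so it suffices to track signs symbolically once. I would also double-check the degenerate cases $g=\Id$ and $h=\Id$ against the known facts $\Id\ast f=0$, $f\ast\Id=\pm Q(f)$, and $\{f,\Id\}_J=0$, which the stated identity must respect and which provide a quick sanity check. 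No genuinely new input beyond Theorem~\ref{theo:main}, Lemma~\ref{lem:Bihamiltonian}, Lemma~\ref{lem:1}, and Proposition~\ref{prop:Ham1} is needed; the result is essentially the ``$Q$ is a derivation of the derived product and commutes with $Y_f$'' principle packaged as a Leibniz rule.
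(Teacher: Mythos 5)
Your route is valid and genuinely different from the paper's. The paper first proves Lemma~\ref{lemma:derived product}, a generalised Leibniz rule for the \emph{odd Jacobi} bracket over $\ast$ carrying two anomalous terms ($f\ast g\ast h$ and $\{f,g\}_J h$), and then obtains the theorem by substituting $f\to Q(f)$, whereupon both anomalies vanish because $Q^2=0$. You instead stay entirely at the level of the even bracket: unwind $g\ast h=(-1)^{\widetilde g+1}Q(g)h$, apply the left Leibniz rule of $\{\cdot,\cdot\}_J$ over the ordinary product (Theorem~\ref{theo:main}), and convert $\{f,Q(g)\}_J$ into $(-1)^{\widetilde f}Q(\{f,g\}_J)$ via $[Q,Y_f]=0$ (Proposition~\ref{prop:Ham1}). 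This is a perfectly sound and arguably more economical derivation, since it never needs the anomalous terms of the Lemma; what it loses is that the Lemma itself (which the paper wants as a standalone statement) is not reproved along the way. Your treatment of the first term, including the parity bookkeeping $\widetilde{\{f,g\}_J}=\widetilde f+\widetilde g$, is correct.

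There is, however, a concrete sign slip in your second term. Expanding $\{f,Q(g)h\}_J$ by the left Leibniz rule, the Koszul sign is governed by the parity of $Q(g)$, namely $\widetilde g+1$, so the cross term is $(-1)^{\widetilde f(\widetilde g+1)}Q(g)\{f,h\}_J$, not $(-1)^{\widetilde f\widetilde g}Q(g)\{f,h\}_J$. With the correct sign, multiplying by the overall prefactor $(-1)^{\widetilde g+1}$ gives
\begin{equation*}
(-1)^{\widetilde g+1}(-1)^{\widetilde f(\widetilde g+1)}Q(g)\{f,h\}_J=(-1)^{\widetilde f(\widetilde g+1)}\left((-1)^{\widetilde g+1}Q(g)\{f,h\}_J\right)=(-1)^{\widetilde f(\widetilde g+1)}\,g\ast\{f,h\}_J
\end{equation*}
immediately from the definition of $\ast$; no commutation of $\{f,h\}_J$ past $Q(g)$ is needed, and indeed the commutation check you propose does not close (the two exponents differ by $\widetilde h(\widetilde g+1)+\widetilde f\widetilde g$). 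Once you drop the spurious commutation and restore the parity of $Q(g)$ in the Leibniz rule, the argument is complete.
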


\begin{proof}
The above theorem follows directly from Lemma \ref{lemma:derived product} upon the replacement $f \rightarrow Q(f)$ and the definition of the Loday--Poisson bracket.
\end{proof}

\noindent \textbf{Statement:} \begin{enumerate}\item\emph{Functions on an odd Jacobi manifold can be considered as elements of a noncommutative Loday--Poisson algebra with an odd form of multplication.}
\item \emph{Hamiltonian vector fields with respect to the Loday--Poisson bracket are derivations (from the left) over the derived product}.
\end{enumerate}
\newpage
\begin{proposition}
With the definitions previously given:
\begin{enumerate}
\item $X_{f \ast g} = (-1)^{\widetilde{f}+1}(f \ast \Id)X_{g} + (-1)^{\widetilde{f} \widetilde{g}}g X_{(f \ast\Id)} + (-1)^{\widetilde{f} +\widetilde{g}}(f \ast g)Q$,
\item $Y_{f \ast g} = (f \ast \Id)Y_{g} + (-1)^{(\widetilde{f}+1)(\widetilde{g}+1)}(g \ast \Id)Y_{f} - (f \ast g \ast \Id)Q$,
\end{enumerate}
where $f,g,h \in C^{\infty}(M)$.
\end{proposition}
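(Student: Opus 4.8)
The plan is to derive both identities by substituting the definition of the derived product $f \ast g = (-1)^{\widetilde{f}+1}Q(f)g$ into the already-established product rules for $X$ and $Y$, and then simplifying using the identities $f \ast \Id = \pm Q(f)$ and $\SN{f,\Id}_J = (-1)^{\widetilde{f}}Q(f)$. Concretely, for part (1), I would start from $X_{f \ast g} = (-1)^{\widetilde{f}+1} X_{Q(f)g}$ (using linearity of $f \mapsto X_f$) and then apply Proposition~\ref{prop:product} with the pair $(Q(f), g)$ in place of $(f,g)$. This produces three terms involving $X_g$, $X_{Q(f)}$, and $Q$, with sign factors depending on $\widetilde{Q(f)} = \widetilde{f}+1$ and $\widetilde{g}$. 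The final cosmetic step is to recognise $Q(f)$ as $\pm(f \ast \Id)$: since $f \ast \Id = (-1)^{\widetilde{f}+1}Q(f)\Id = (-1)^{\widetilde{f}+1}Q(f)$, one has $Q(f) = (-1)^{\widetilde{f}+1}(f\ast\Id)$, and likewise $Q(f)g = (-1)^{\widetilde{f}+1}(f \ast g)$, $X_{Q(f)} = (-1)^{\widetilde{f}+1} X_{f \ast \Id}$.

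For part (2), the cleanest route is to use Lemma~\ref{lem:Bihamiltonian}, namely $Y_h = X_{Q(h)}$, to write $Y_{f \ast g} = X_{Q(f \ast g)}$. Since $f \ast g = (-1)^{\widetilde{f}+1}Q(f)g$ and $Q$ is a derivation with $Q^2 = 0$, we get $Q(f \ast g) = (-1)^{\widetilde{f}+1}Q(f)Q(g)$ (the $Q(Q(f))g$ term drops out). Then apply Proposition~\ref{prop:product} to $X_{Q(f)Q(g)}$, which yields terms $Q(f)X_{Q(g)} = Q(f)Y_g$, $Q(g)X_{Q(f)} = \pm Q(g)Y_f$, and $Q(f)Q(g)Q$. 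Re-expressing each $Q(f)$, $Q(g)$, and $Q(f)Q(g)$ in terms of $f \ast \Id$, $g \ast \Id$, $f \ast g \ast \Id$ respectively, and carefully collecting the accumulated sign factors, should produce exactly the stated formula. Alternatively one could differentiate Proposition~\ref{prop:LodayProduct} by substituting the derived-product definition directly, but going through $Y = X \circ Q$ is less error-prone.

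I expect no conceptual obstacle; the whole statement is a bookkeeping exercise. The one genuinely delicate point is the sign tracking — in part (2) especially, one must be consistent about whether $\widetilde{Q(f)} = \widetilde{f}+1$ is used in the exponents coming from Proposition~\ref{prop:product} (which was stated for a generic pair), and one must verify that the $Q^2 = 0$ cancellation and the replacements $Q(f) \leftrightarrow f \ast \Id$ introduce exactly the sign needed to match the displayed right-hand side. A useful sanity check, which I would flag at the end, is to specialise to $g = \Id$ (using $\Id \ast f = 0$, $f \ast \Id = \pm Q(f)$) and confirm the identities reduce to the known relations $X_{f \ast \Id} = X_{Q(f)} = Y_f$ up to sign, and $Y_{f \ast \Id} = Y_{Q(f)} = X_{Q^2(f)} = 0$, consistent with both sides vanishing or collapsing appropriately.
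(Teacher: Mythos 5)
Your proposal is correct and, for part (1), is exactly the paper's argument: substitute $Q(f)$ for $f$ in Proposition~\ref{prop:product} and translate back via $f\ast\Id=(-1)^{\widetilde{f}+1}Q(f)$. For part (2) you take a slightly different route: the paper applies Proposition~\ref{prop:LodayProduct} to the pair $(Q(f),g)$ and discards the $Q(Q(f))$ and $Y_{Q(f)}=X_{Q^{2}(f)}=0$ terms, whereas you go through Lemma~\ref{lem:Bihamiltonian} to write $Y_{f\ast g}=X_{Q(f\ast g)}=X_{Q(f)Q(g)}$ and then apply Proposition~\ref{prop:product} to the pair $(Q(f),Q(g))$. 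The two are essentially equivalent --- Proposition~\ref{prop:LodayProduct} was itself proved by exactly this $Y=X\circ Q$ reduction --- and your version is, if anything, marginally cleaner since it invokes only the simpler product rule. One bookkeeping slip to fix: since $f\ast g=(-1)^{\widetilde{f}+1}Q(f)g$ and $Q\bigl(Q(f)g\bigr)=(-1)^{\widetilde{f}+1}Q(f)Q(g)$, the two signs cancel and $Q(f\ast g)=Q(f)Q(g)$ with no extra sign, not $(-1)^{\widetilde{f}+1}Q(f)Q(g)$ as you wrote; carrying your spurious sign through would spoil the match with the stated right-hand side. With that corrected the computation closes: Proposition~\ref{prop:product} gives $X_{Q(f)Q(g)}=(-1)^{\widetilde{f}+1}Q(f)\,Y_{g}+(-1)^{\widetilde{f}(\widetilde{g}+1)}Q(g)\,Y_{f}+(-1)^{\widetilde{f}+\widetilde{g}+1}Q(f)Q(g)\,Q$, and the three prefactors convert precisely into $(f\ast\Id)$, $(-1)^{(\widetilde{f}+1)(\widetilde{g}+1)}(g\ast\Id)$ and $-(f\ast g\ast\Id)$. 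Your $g=\Id$ sanity check is also consistent with the paper.
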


\begin{proof}
The proof follows the definitions directly.
\begin{enumerate}
\item From Proposition \ref{prop:product}. we have
\begin{equation*}
X_{Q(f) g} = (-1)^{\widetilde{f}+1}Q(f)X_{g} + (-1)^{\widetilde{f}\widetilde{g}}g X_{Q(f)}+ (-1)^{\widetilde{f}+ \widetilde{g}} Q(f)g Q.
\end{equation*}
Then using the definition of the  derived product and $f \ast \Id = (-1)^{\widetilde{f}+1}Q(f)$ the first part of the proposition is established.
\item From Proposition \ref{prop:LodayProduct}. we have
\begin{equation*}
Y_{Q(f)g} = Q(f)Y_{g} + (-1)^{(\widetilde{f}+1)\widetilde{g} + \widetilde{g}+1}Q(g)\left(X_{Q(f)} +(-1)^{\widetilde{f} }Q(f)Q \right),
\end{equation*}
taking into account that $Q^{2}=0$. Then multiplying by the correct sign factor and using the definition of $Y_{f}$ produces
\begin{equation*}
Y_{f\ast g} = (f \ast \Id)Y_{g} + (-1)^{(\widetilde{f}+1)(\widetilde{g}+1)}(g \ast \Id)Y_{f}+ (-1)^{\widetilde{f} \widetilde{g}}Q(g)Q(f)Q.
\end{equation*}
Then using $Q(g)Q(f) = (-1)^{(\widetilde{f}+1)(\widetilde{g}+1)}  Q(f) Q(g) $ and the definition of the derived product the second part of the proposition is established.
\end{enumerate}
\end{proof}

\begin{remark}
As far as the author is aware, the study of Poisson-like brackets on algebras with an odd form of multiplication have not been studied in detail.
\end{remark}

\section{Jacobi algebroids}

An interesting class of odd Jacobi manifolds are the Jacobi algebroids  \cite{Grabowski:2001,Iglesias2001}. We apply some of the pervious constructions to the setting of Jacobi algebroids.

\subsection*{Definitions and basic results}

\begin{definition}
A vector bundle $E \rightarrow M$ is said to have the structure of a \textbf{Jacobi algebroid} if and only if the total space of $\Pi E^{*}$ comes equipped with a weight minus one odd Jacobi structure.
\end{definition}

It is well-known that Jacobi algebroids, which are also known as generalised Lie algebroids,  are equivalent to Lie algebroids in the presence of a 1-cocycle, see \cite{Iglesias2001}. Let us employ natural local coordinates $(x^{A}, \eta_{\alpha}, p_{A}, \pi^{\alpha})$ on the total space of $T^{*}(\Pi E^{*})$. The weight is assigned  as $\w(x^{A}) = 0$, $\w(p_{A})=0$, $\w(\eta_{\alpha}) = +1$ and $\w(\pi^{\alpha}) = -1$.  This is the \emph{natural weight} associated with the vector bundle structure $E^{*} \rightarrow M$.\\
 
The parity of the coordinates is given by $ \widetilde{x}^{A}=  \widetilde{A}$, $\widetilde{\eta}_{\alpha}= (\widetilde{\alpha} +1)$, $\widetilde{p}_{A}= \widetilde{A}$ and  $\widetilde{\pi}^{\alpha} =  (\widetilde{\alpha}+1)$. In these natural local coordinates the odd Jacobi structure is given by

\begin{eqnarray}
S &=&(-1)^{\widetilde{\alpha}}\pi^{\alpha}Q_{\alpha}^{A}(x)p_{A}+ (-1)^{\widetilde{\alpha} + \widetilde{\beta}}\frac{1}{2}\pi^{\alpha}\pi^{\beta}Q_{\beta \alpha}^{\gamma}\eta_{\gamma},\\
\nonumber  \Q &=& \pi^{\alpha}Q_{\alpha}(x),
\end{eqnarray}

which are both functions on the total space of $T^{*}(\Pi E^{*})$. The algebra of  ``multivector fields" $C^{\infty}(\Pi E^{*})$ comes equipped with an odd Jacobi bracket viz

\begin{equation}
\nonumber \SN{X,Y}_{E}  = (-1)^{\widetilde{X}+1}\{  \{S, X   \}_{T^{*}(\Pi E^{*})} , Y\}_{T^{*}(\Pi E^{*})}- (-1)^{\widetilde{X}+1} \{ \Q, XY \}_{T^{*}(\Pi E^{*})},
\end{equation}

with $X,Y \in C^{\infty}(\Pi E^{*})$. \\

In natural local coordinates this bracket is given by

\begin{eqnarray}\nonumber
\SN{X,Y}_{E} &=& Q_{\alpha}^{A}\left((-1)^{(\widetilde{X}+ \widetilde{\alpha}+1)(\widetilde{A}+1) } \frac{\partial X}{\partial \eta_{\alpha}}\frac{\partial Y}{\partial x^{A}}  - (-1)^{(\widetilde{X}+1)\widetilde{\alpha}} \frac{\partial X}{\partial x^{A}}\frac{\partial Y}{\partial \eta_{\alpha}}\right)\\
\nonumber &-& (-1)^{(\widetilde{X}+1)\widetilde{\alpha}+ \widetilde{\beta}}Q_{\alpha \beta}^{\gamma}\eta_{\gamma}\frac{\partial X}{\partial\eta_{\beta}}\frac{\partial Y}{\partial \eta_{\alpha}}\\
\nonumber &+&(-1)^{\widetilde{X}}Q_{\alpha}\frac{\partial X}{\partial \eta_{\alpha}} Y  + X Q_{\alpha}\frac{\partial Y}{\partial \eta_{\alpha}}.
\end{eqnarray}

Where $X = X(x, \eta) =  X(x) + X^{\alpha}(x) \eta_{\alpha} + \frac{1}{2!}X^{\alpha \beta}(x) \eta_{\beta}\eta_{\alpha} + \cdots$ \emph{etc}. Clearly, this odd Jacobi bracket is of weight minus one. If $Q_{\alpha} =0$, then the Jacobi algebroid reduces to a genuine Lie algebroid and the above bracket is a weight minus one Schouten bracket. The weight minus one odd Jacobi bracket is a natural generalisation of the weight minus one Schouten bracket associated with a Lie algebroid. \\







\subsection*{The Loday--Poisson brackets}
Now let us proceed to the Loday--Poisson bracket derived from the weight minus one odd Jacobi bracket and the homological vector field $Q = Q_{\alpha}\frac{\partial}{\partial \eta_{\alpha}}$. In natural  local coordinates the Loday--Poisson bracket is given by

\begin{eqnarray}
\nonumber \{X,Y \}_{E} &=& Q_{\alpha}^{A}\left(  (-1)^{\widetilde{A}(\widetilde{X} + \widetilde{\alpha})} Q_{\delta} \frac{\partial^{2}X}{\partial \eta_{\delta}\partial \eta_{\alpha}} \frac{\partial Y}{\partial x^{A}}  + (-1)^{\widetilde{X}(\widetilde{\alpha}+1)} \frac{\partial Q_{\delta}}{\partial x^{A}} \frac{\partial X}{\partial \eta_{\delta}} \frac{\partial  Y}{\partial \eta_{\alpha}}\right.\\
 \nonumber &+&\left. (-1)^{\widetilde{X}(\widetilde{\alpha}+1) + \widetilde{A}} Q_{\delta}\frac{\partial^{2}X}{\partial \eta_{\delta} \partial x^{A}}\frac{\partial Y}{\partial \eta_{\alpha}} \right) - (-1)^{\widetilde{X}(\widetilde{\alpha}+1)}Q_{\alpha \beta}^{\gamma}\eta_{\gamma}Q_{\delta} \frac{\partial^{2} X}{\partial \eta_{\delta} \partial \eta_{\beta}} \frac{\partial Y}{\partial \eta_{\alpha}}\\
 &+& (-1)^{\widetilde{\alpha}(\widetilde{X}+1)} Q_{\alpha}Q_{\beta}\frac{\partial X}{\partial \eta_{\beta}}\frac{\partial Y}{\partial \eta_{\alpha}}.
\end{eqnarray}

By construction, which is easily verified in natural local coordinates, the associated Loday--Poisson bracket is of weight minus two.

 \begin{remark}The Loday--Poisson bracket on $C^{\infty}(\Pi E^{*})$ should not be confused with the Poisson bracket on $C^{\infty}(E^{*})$ associated with the Lie algebroid structure ``behind" the Jacobi algebroid. Indeed, if we have a Lie algebroid and the trivial 1-cocycle $\mathcal{Q} =0$, then the associated Loday--Poisson bracket is obviously  itself trivial.
\end{remark}

\subsection*{The derived product}

The algebra of ``multivector fields" $C^{\infty}(\Pi E^{*})$ comes equipped  with a derived product viz

\begin{equation*}
X \ast Y = (-1)^{\widetilde{X}+1} Q(X)Y = (-1)^{\widetilde{X}+1}Q_{\alpha}\frac{\partial X}{\partial \eta_{\alpha}} \:\: Y.
\end{equation*}

Note that $f \ast X =0$, where $f \in C^{\infty}(M)$. Via Theorem \ref{theo:derived product}. the Loday--Poisson bracket satisfies the Leibniz rule from the left over the derived product:

\begin{equation*}
\{X, Y \ast Z \}_{E} = \{X,Y \}_{E} \ast Z + (-1)^{\widetilde{X} (\widetilde{Y}+1)}Y\ast \{X,Z \}_{E},
\end{equation*}

where $X,Y,Z \in C^{\infty}(\Pi E^{*})$.\\

\noindent \textbf{Statement:}\emph{``Multivector fields" on a Jacobi algebroid can be considered as elements of a noncommutative Loday--Poisson algebra with an odd form of multiplication}.










\section{Final remarks}

In this paper we used the derived bracket formalism to construct a Loday bracket on $C^{\infty}(M)$ from the initial datum of an odd Jacobi structure $(S,Q)$ on the supermanifold $M$. The Loday bracket in question is the bracket derived from the odd Jacobi bracket and the homological vector field $Q$. Furthermore it was shown that this Loday bracket satisfies the Leibniz rule from the left over the standard supercommutative product of functions and the derived product generated by the homological vector field. Thus the nomenclature \emph{Loday--Poisson bracket}. Some of the relations between the various Hamiltonian vector fields were also explored, as were some specific examples of Loday--Poisson brackets such as those in the theory of Lie algebroids and Jacobi algebroids.\\

Is is important to remark that the construction of the Loday--Poisson bracket from the odd Jacobi bracket presented here makes use of only half the structure available, namely just the homological vector field. This is the obvious thing to do if one wants to pass from an odd Jacobi bracket to an even Loday bracket. An obvious question here is \emph{can one do better and use the full odd Jacobi structure to pass from the odd Jacobi bracket to an even Loday bracket?}

\section*{Acknowledgements}
The author would like to thank Prof. J. Grabowski  and Dr R.A. Mehta for  their comments on  earlier drafts of this work.

\vfill
\begin{center}
Andrew James Bruce\\ \small \emph{Pembrokeshire College},\\
\small \emph{Haverfordwest, Pembrokeshire},\\\small  \emph{SA61 1SZ, UK}\\\small email:\texttt{andrewjamesbruce@googlemail.com}
\end{center}


\begin{thebibliography}{10}

\begin{small}

\bibitem{Alexandrov:1997}
M. Alexandrov, M. Kontsevich, A. Schwartz, and O. Zaboronsky.
\newblock{The geometry of the master equation and topological quantum field theory.}
\newblock{ \emph{Int. J. Mod. Phys}., A12: 1405–1430, 1997.}

\bibitem{Bruce:2012}
Andrew James Bruce.
\newblock{Odd Jacobi manifolds: general theory and applications to generalised Lie algebroids.}
\newblock{\emph{Extracta Math.} \textbf{27}(1) (2012), 91-123.}


\bibitem{Grabowski:2001}
J.~{Grabowski} and G.~{Marmo}.
\newblock {Jacobi structures revisited}.
\newblock {\emph{ J. Phys. A: Math. Gen.}, \textbf{34}:10975--10990, 2001}.

\bibitem{Grabowski:2001B}
J.~{Grabowski} and G.~{Marmo}.
 \newblock{Non-antisymmetric versions of Nambu-Poisson and Lie algebroid brackets}.
  \newblock{\emph{ J. Phys. A: Math. Gen.} \textbf{34} (2001), 3803--3809}.

\bibitem{Grabowski:2003}
J.~{Grabowski} and G.~{Marmo}.
\newblock {The graded Jacobi algebras and (co)homology}.
\newblock {\emph{ J. Phys. A: Math. Gen.}, \textbf{36}:161--181, 2003}.

\bibitem{Grabowski:2003B}
J.~{Grabowski} and  G.~{Marmo}.
\newblock{Binary operations in classical and quantum mechanics},
\newblock{in {\sl Classical and Quantum
Integrability}, J.~Grabowski and P.~Urba\'nski eds., \emph {Banach Center Publ.} \textbf {59}, Warszawa 2003, 163--172.}

\bibitem{Grabowski:2011}
J.~{Grabowski}
\newblock{Graded contact manifolds and principal Courant algebroids}.
\newblock{arXiv preprint: \texttt{arXiv:1112.0759 [math.DG]}, 2011}


\bibitem{Iglesias2001}
D. ~{Iglesias} and  J.C. ~{Marrero}.
\newblock{Generalized Lie bialgebroids and Jacobi structures}.
\newblock{ \emph{J. Geom. and Phys.}, \textbf{40}, 176--199, 2001}.





\bibitem{Kosmann--Schwarzbach:1996}
Yvette Kosmann--Schwarzbach.
\newblock{From Poisson algebras to Gerstenhaber algebras.}
\newblock{\emph{Ann. Inst. Fourier} (Grenoble) \textbf{46} (1996), 1241-1272.}

\bibitem{Kosmann--Schwarzbach:2004}
Yvette Kosmann--Schwarzbach.
\newblock{Derived brackets.}
\newblock{\emph{Lett. Math. Phys.}, \textbf{69} (2004), 61-87.}

\bibitem{Koszul:1985}
Jean-Louis Koszul.
\newblock{Crochet de Schouten-Nijenhuis et cohomologie.}
\newblock{ \emph{The mathematical heritage of Elie Cartan}, Semin. Lyon, Ast\'{e}risque, No.Hors S\'{e}r. , 257--271, 1985.}


\bibitem{Lichnerowicz1977}
A.~Lichnerowicz.
\newblock {Les vari$\acute{\textnormal{e}}$t$\acute{\textnormal{e}}$s de
  Poisson et leurs alg$\grave{\textnormal{e}}$bres de Lie
  associ$\acute{\textnormal{e}}$es}.
\newblock {\em J. Diff. Geom.}, 12:253--300, 1977.


\bibitem{Loday:1992}
J.L. Loday.
\newblock{\emph{Cyclic homology}, }
\newblock{Springer (1992) (Second ed.: 1998)}



\bibitem{Loday:1993}
J.L. Loday
\newblock{Une version non commutative des alg\`ebres de Lie: les alg\`ebres de Leibniz}
\newblock{\emph{Enseign. Math.} , \textbf{39 }(1993) pp. 269-293.}

\bibitem{Loday:2001}
J.L. Loday.
\newblock{Dialgebras. In \emph{Dialgebras and related operads}, 7-66, \emph{Lecture Notes in Math}., \textbf{1763}, Springer, Berlin 2001.}


\bibitem{Schwartz:1993}
A. Schwartz.
\newblock{ Semiclassical approximation in Batalin-Vilkovisky formalism.}
\newblock{ \emph{Commun. Math. Phys}., \textbf{158} (2): 373–396, 1993.}



\bibitem{Vaintrob:1997}
A.~Yu. Va$\breve{\textrm{{\i}}}$ntrob.
\newblock {Lie algebroids and homological vector fields}.
\newblock {\em Uspekhi Matem.Nauk.}, \textbf{52}(2):428--429, 1997.

\bibitem{Voronov:2001qf}
Theodore Voronov.
\newblock {Graded manifolds and {D}rinfeld doubles for {L}ie bialgebroids}.
\newblock In \emph{Quantization, Poisson Brackets and Beyond}, volume 315 of
  \emph{Contemp. Math}., pages 131-168. Amer. Math. Soc., Providence, RI, 2002.


\end{small}
\end{thebibliography}
\end{document}